\newcommand{\revise}[1]{#1}
\newcommand{\set}[1]{\{#1\}}
\newcommand{\relmiddle}[1]{\mathrel{}\middle#1\mathrel{}}
\newcommand{\inset}[2]{\left\{#1 \relmiddle| #2\right\}}
\newcommand{\size}[1]{| #1|}
\newcommand{\order}[1]{O(#1)}
\newcommand{\problemtitle}[1]{\gdef\@problemtitle{#1}}
\newcommand{\probleminput}[1]{\gdef\@probleminput{#1}}
\newcommand{\problemoutput}[1]{\gdef\@problemoutput{#1}}
  \par\addvspace{.5\baselineskip}
  \par\addvspace{.5\baselineskip}
\newenvironment{claim}[1]{\par\noindent\underline{Claim:}\space#1}{}
\newcommand{\claimqed}{\hfill $\triangleleft$}
\newenvironment{claimproof}[1]{\par\noindent \textit{Proof of Claim. }\space#1}{\claimqed\\}
\newcommand{\comp}[1]{\mu(#1)}
\newcommand{\wpp}[1]{W(#1)}
\newcommand{\distor}[2]{\delta(#1, #2)}
\newcommand{\OPT}{\overline{\rm OPT}}
\newcommand{\neigh}[2][]{\ifthenelse{\isempty{#1}}{\mathtt{neigh}(#2)}{{\tt neigh}_{\text{\rom{#1}}}(#2)}}
\newcommand*{\rom}[1]{\expandafter\@slowromancap\romannumeral #1@}
\begin{document}

\title{An Approximation Algorithm for \texorpdfstring{$K$}{K}-best Enumeration of Minimal Connected Edge Dominating Sets with Cardinality Constraints}
\author[1]{Kazuhiro Kurita}
\author[2]{Kunihiro Wasa}


\affil[1]{Nagoya University, Nagoya, Japan, {kurita@i.nagoya-u.ac.jp}}
\affil[2]{Hosei University of Technology, Tokyo, Japan, {wasa@hosei.ac.jp}}

\maketitle

\begin{abstract}
    \emph{$K$-best enumeration}, which asks to output $k$-best solutions without duplication, is a helpful tool in data analysis for many fields.
    In such fields, graphs typically represent data. Thus subgraph enumeration has been paid much attention to such fields. 
    However, $k$-best enumeration tends to be intractable since, in many cases, finding one optimum solution is \NP-hard. 
    To overcome this difficulty, 
    we combine $k$-best enumeration with a concept of enumeration algorithms called \emph{approximation enumeration algorithms}.
    As a main result, 
    we propose a $4$-approximation algorithm for minimal connected edge dominating sets which outputs $k$ minimal solutions with cardinality at most $4\cdot\OPT$, 
    where $\OPT$ is the cardinality of a minimum solution which is \emph{not} outputted by the algorithm.
    Our proposed algorithm runs in $\order{nm^2\Delta}$ delay, where $n$, $m$, $\Delta$ are the number of vertices, the number of edges, and the maximum degree of an input graph. 
\end{abstract}

\section{Introduction}
Enumeration and finding multiple solutions are required tasks in various fields. 
Such tasks have been used in operations research~\cite{doi:10.1287/ijoc.2020.1028}, machine learning~\cite{DBLP:conf/aaai/HaraI18,DBLP:conf/aaai/HanakaKKO21,DBLP:conf/ijcai/BasteFJMOPR20}, data mining~\cite{DBLP:conf/fimi/UnoKA04,DBLP:conf/kdd/ConteMSGMV18}, and database theory~\cite{DBLP:journals/jcss/FaginLN03,DBLP:conf/icde/AjamiC19,DBLP:conf/cikm/SadeC20,Kimelfeld:Efficiently:2008,DBLP:conf/pods/RavidMK19,DBLP:conf/sigmod/YangRLG18}. 
The computational cost of these tasks becomes very high when the number of output solutions becomes large.
Especially, enumeration has exponentially many solutions for the input size.
This makes the computational cost of enumeration very high. 
No matter what algorithm is used, the output of the solution becomes the bottleneck.
Algorithms for enumerating the $k$-best solutions are studied to overcome this obstacle.
In many applications, such as the database field, it is sufficient to find the $k$-best solutions~\cite{DBLP:conf/pods/RavidMK19,DBLP:conf/icde/AjamiC19,DBLP:journals/corr/abs-2012-09153}.
With this motivation, we believe that studying theoretically efficient $k$-best enumeration algorithms is essential.

One measure of the theoretical efficiency of an enumeration algorithm is the \emph{delay}.
In this paper, we evaluate the efficiency of $k$-best enumeration algorithms using the delay.
It is defined by the maximum time between two consecutive solutions.
We call a $k$-best enumeration algorithm a polynomial delay algorithm if a polynomial bounds its delay in the size of an input.

The task of $k$-best enumeration is to output $k$ solutions $\set{S_1, \ldots, S_k}$ such that for any other solution $S'$, $\size{S_i} \le \size{S'}$ (or $\size{S_i} \ge \size{S'}$) holds.
There are several algorithms for enumerating $k$-best solutions of (maximal) matchings, $s$-$t$ paths, and spanning trees~\cite{DBLP:journals/corr/abs-2105-04146,Murty:Letter:1968,Lawler1972,Gabow:Two:1977}.
Eppstein gave a comprehensive survey of $k$-best enumeration~\cite{DBLP:reference/algo/Eppstein16}.
Unfortunately, $k$-best enumeration problems are \NP-hard in many cases since $1$-best enumeration is equal to an optimization problem.
Therefore, it is difficult to design efficient $k$-best enumeration algorithms.
To overcome this difficulty, 
we adopt a viewpoint of approximation for $k$-best enumeration algorithms. 
We call an enumeration algorithm $\mathcal A$ a \emph{$c$-approximation $k$-best enumeration algorithm} 
if $\mathcal A$ outputs $k$-solutions $\set{S_1, \ldots, S_k}$ that satisfies $\size{S} \le c \cdot \OPT$
for any $S \in \set{S_1, \ldots, S_k}$, where $\OPT$ is the cardinality of a minimum solution which is \emph{not} outputted by the algorithm.
This definition is proposed by Fagin et al.~\cite{DBLP:journals/jcss/FaginLN03}.
They call this concept $\theta$-approximation to the top $k$ answers. 
In the field of enumeration algorithms, an algorithm for finding top $k$ answer is also called a $k$-best enumeration algorithm.
For this reason, we call this concept a $c$-approximation $k$-best enumeration algorithm.

In this paper, we consider $k$-best enumeration of the minimal connected edge dominating sets.
The enumeration of minimal (edge) dominating sets is a central topic in the field of enumeration algorithms.
Thus, without cardinality constraints, the problem has hardness results and positive results for various graph classes~\cite{Kante:Limouzy:WG:2015,DBLP:conf/fct/KanteLMN11,DBLP:journals/siamdm/KanteLMN14,DBLP:journals/talg/BonamyDHPR20}.
For minimal edge dominating set enumeration, Kant{\'e} et al.\ developed a polynomial-delay and polynomial-space algorithm~\cite{DBLP:conf/wads/KanteLMNU15}.
As a first natural question, we consider whether a polynomial-delay $k$-best enumeration algorithm exists for minimal connected edge dominating sets.
It is known that finding a minimum connected edge dominating set is \NP-hard~\cite{Munaro}(p. 102, Lemma 4.4.3).
Therefore, a polynomial-delay $k$-best enumeration of the minimal connected edge dominating set is intractable.

Another motivation for addressing $k$-best enumeration of the minimal connected edge dominating sets is the result of Kobayashi et al.~\cite{Kobayashi:Efficient:2020}.
They showed that enumeration of small minimal Steiner trees and edge dominating sets can be solved in polynomial delay with constant approximation factor.
To enumerate $k$-best solutions in an approximate manner, it must at least be possible to enumerate all solutions efficiently.
Connectivity and edge domination not only allow for efficient enumeration, but also allow for enumeration of only solutions of small cardinality.
Therefore, we studied the problem that satisfies both of two constraint simultaneously. 


\noindent\textbf{Main result:}
We show that we can enumerate $k$-best minimal connected edge dominating sets approximately with polynomial delay.
Note that our algorithm also achieves a polynomial-delay enumeration of ``all'' minimal connected edge dominating sets.
We summarize our main results in the following theorem and corollary. 

\begin{theorem}
    There is an algorithm that approximately enumerates $k$-best minimal connected edge dominating sets with a constant approximation ratio in polynomial delay.
\end{theorem}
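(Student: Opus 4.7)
The plan is to combine three ingredients: a constant-factor approximation for minimum connected edge dominating set (CEDS), a proximity-search-style neighborhood that reconstructs solutions locally around a reference, and a reverse-search traversal with polynomial delay. First, we compute a seed $S_0$ using a constant-factor approximation for minimum CEDS---for instance via the classical reduction to minimum connected dominating set in the line graph $L(G)$, which admits a constant-factor approximation---so that $|S_0|\le c_1\cdot|S^*|$ for some constant $c_1$, where $S^*$ denotes a minimum CEDS.

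Next, we set up a proximity-search framework on minimal CEDS. We define a proximity operator $D\prox D'$ returning a common ``core'' of two minimal CEDS, and a neighbor operator $\neigh{D}$ returning polynomially many candidate solutions, each obtained by removing a bounded substructure of $D$ and re-completing the result using edges drawn from $S_0$. We engineer the pair so that two properties hold: (i) \emph{navigability}, meaning that for every minimal $D$ and every target $D^*$, some $D'\in\neigh{D}$ satisfies $|D'\prox D^*|>|D\prox D^*|$, guaranteeing completeness of the traversal; and (ii) \emph{size control}, meaning that every produced $D'$ satisfies $|D'|\le\alpha\cdot|S_0|$ for a constant $\alpha$, which follows because each reconstruction re-uses $S_0$ as a backbone and grafts only a bounded number of repair edges.

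Finally, we run a canonical reverse-search DFS rooted at $S_0$: at each visited $D$ we enumerate $\neigh{D}$ and recurse on those neighbors whose canonical parent (the unique neighbor realizing a strict proximity gain toward $S_0$) is $D$. Standard analysis then yields $O(nm^2\Delta)$ delay, since neighbor generation and canonicity checks each take polynomial time. The approximation guarantee follows directly: every output $S_i$ satisfies $|S_i|\le\alpha|S_0|\le\alpha c_1|S^*|\le\alpha c_1\cdot|\OPT|$, because $|\OPT|$---the cardinality of a minimum non-output solution---is at least $|S^*|$ (and when $|\OPT|>\alpha|S_0|$ the bound is immediate). The main obstacle is to reconcile (i) and (ii): completeness needs a rich neighborhood, whereas size control demands tight discipline on reconstructions. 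The resolution is to define neighbors as spanning-tree completions of carefully chosen $S_0$-anchored vertex sets, so that every candidate contains at most $2|S_0|-1$ edges regardless of $D$, while still being expressive enough to reach every minimal CEDS via a proximity-improving sequence.
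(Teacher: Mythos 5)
Your proposal correctly identifies the overall architecture (a seed from a constant-factor approximation of minimum connected edge dominating set, a local-modification neighborhood among minimal solutions, and a traversal with polynomial delay), but the two claims that carry all the weight are asserted rather than proved, and one of them is inconsistent as stated. Your size-control property (ii) demands that \emph{every} generated neighbor satisfy $|D'|\le\alpha|S_0|$ for a constant $\alpha$ (indeed, at most $2|S_0|-1$ edges), while your navigability property (i) demands that the traversal reach \emph{every} minimal connected edge dominating set. Together these would force every minimal solution to have cardinality at most $\alpha|S_0|$, a structural statement about the problem that you never establish and that the paper never uses. The paper's argument is genuinely different and more delicate: it does not bound all visited solutions uniformly, but shows --- via the three neighbor types and the potential $\distor{X}{Y}=|X\cup Y|$ --- that between the current best output $X$ and the minimum not-yet-output solution $Y$ there is a directed path in the supergraph all of whose intermediate solutions have size at most $|X|+2|Y|$; only the \emph{priority queue} then converts this path-specific bound into the guarantee that the next emitted solution has size at most $(c+2)\cdot\OPT$. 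Your sketch contains no analogue of this step, so the inequality $|S_i|\le\alpha|S_0|$ has no support.

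Moreover, the actual mathematical content --- defining the neighbors (Type-I for reconnection after deleting a bridge, Type-II and Type-III for re-domination after deleting a pendant edge), proving each yields a minimal connected edge dominating set, and proving strong connectivity by a case analysis on whether $G[X\cup Y]$ contains a cycle or is a tree --- is exactly what you defer to ``engineering'' and ``carefully chosen $S_0$-anchored vertex sets.'' The hard case (the tree case, where a Type-III neighbor temporarily adds an edge set $F$ with $|F|\le|Y|+1$ whose extraneous edges must then be peeled off one at a time by further Type-I/II moves) is where both completeness and the cardinality bound are won, and nothing in your sketch addresses it. Finally, your reverse-search traversal requires every solution to have a \emph{canonical parent} of which it is a neighbor, with the parent map forming a tree rooted at $S_0$; property (i) only supplies, for each $D$, an out-neighbor closer to a given target, which is not the same thing. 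The paper sidesteps this entirely by storing all visited solutions (accepting exponential space) and running a plain graph search on the supergraph. As written, your proposal does not close any of these gaps.
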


\begin{corollary}
    There is an algorithm that enumerates all minimal connected edge dominating sets with polynomial delay.
\end{corollary}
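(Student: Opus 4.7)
The plan is to combine a Lawler-style partition scheme for $k$-best enumeration with a constant-factor approximation subroutine for the constrained minimum connected edge dominating set (CEDS) problem. Concretely, I maintain a priority queue whose items are \emph{subproblems} $(I, X)$ where $I$ is a set of edges forced into every solution and $X$ is a set of edges forbidden. Each subproblem is associated with an \emph{approximate} minimum minimal CEDS $S_{I,X}$ containing $I$, avoiding $X$, of size at most $\alpha \cdot \mu(I,X)$, where $\mu(I,X)$ denotes the true minimum over that subproblem; the priority of the item in the queue is $|S_{I,X}|$. At each iteration I pop the item of smallest priority, output $S_{I,X}$, and split the subproblem using Lawler's trick: for an enumeration of the edges $e_1,\dots,e_t$ in $S_{I,X}\setminus I$, create children $(I\cup\{e_1,\dots,e_{j-1}\}, X\cup\{e_j\})$ for each $j$, so that $S_{I,X}$ itself is excluded from all descendants while every other solution of $(I,X)$ lies in exactly one child.

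The technical core is the constrained approximation subroutine. First I would reduce CEDS on $G$ to connected dominating set on the line graph $L(G)$, where forced/forbidden edges translate into forced/forbidden vertices. On this reduction, I adapt a Guha--Khuller-style greedy algorithm: grow a connected subset starting from $I$ (already connected we may assume, otherwise we bridge it using at most a constant factor more edges), repeatedly adding the vertex of $L(G)$ that dominates the most uncovered vertices while avoiding $X$. This yields a connected dominating set in $L(G)$, hence a connected edge dominating set in $G$, of size at most a constant times the optimum; after the greedy phase I remove redundancies to obtain a \emph{minimal} CEDS still containing $I$ and avoiding $X$, observing that removing edges cannot increase the cardinality and that the remaining set is still a valid feasible solution in its subproblem. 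The whole subroutine runs in polynomial time in $n$, $m$, and $\Delta$.

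For correctness and the approximation guarantee, I argue as in Lawler: the children of $(I,X)$ partition the solutions of $(I,X)$ other than $S_{I,X}$, and together with the items left untouched in the queue, the queue always represents exactly the set of not-yet-output solutions. If at the moment we output $S_{I,X}$ the queue priorities are $a_1 \le a_2 \le \dots$ with $a_1 = |S_{I,X}|$, then any not-yet-output solution $T$ lies in some queued subproblem $(I',X')$ with $|T|\ge \mu(I',X') \ge a_{I',X'}/\alpha \ge a_1/\alpha$. Hence $|S_{I,X}| \le \alpha \cdot \OPT$, establishing the approximation guarantee with respect to the smallest unreported solution. The delay bound follows because between two outputs the algorithm performs one pop, one split producing $O(m)$ children, and one invocation of the approximation subroutine per child; plugging in the claimed $\order{nm^2\Delta}$ bound for the subroutine gives polynomial delay.

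The main obstacle I expect is the constrained approximation step. In particular, forcing the set $I$ to lie inside every candidate minimal solution may break the usual ratio arguments: the initial $I$ may be disconnected in the line graph, and greedy-style minimality reduction must be carried out without ever dropping an edge of $I$. Handling this will require a careful bridging argument showing that connecting up $I$ costs only a constant factor over the unconstrained optimum of the subproblem, and that subsequent minimalisation preserves both connectivity and the inclusion of $I$. Resolving this is what allows the same constant $\alpha$ to be used at every node of the enumeration tree, which in turn is what makes the overall $k$-best guarantee constant-factor.
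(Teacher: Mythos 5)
Your proposal takes a genuinely different route from the paper --- a Lawler-style binary partition with a constrained approximation oracle, versus the paper's supergraph (solution-graph) technique --- but it has a gap that is not merely technical, and it is exactly the obstacle that the paper's choice of technique is designed to avoid. In your scheme, each subproblem $(I,X)$ must be supplied with a \emph{minimal} connected edge dominating set that \emph{contains every edge of $I$} and avoids $X$ (or a certificate that none exists). Finding a connected edge dominating set through $I$ and then ``removing redundancies'' does not achieve this: minimalisation may be forced to delete edges of $I$, since a minimum CEDS containing $I$ need not be minimal, and a given CEDS containing $I$ may contain no minimal CEDS that still contains $I$. Deciding whether a minimal solution extending a prescribed set exists is the \emph{extension problem}, which is NP-hard for minimal (edge) dominating sets and their variants; without solving it the priority queue can fill with subproblems that contain no minimal solutions at all, and the algorithm loses both its output-correctness (the queue no longer ``represents exactly the not-yet-output solutions'') and its delay bound. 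You flag this as ``the main obstacle I expect,'' but it is the crux rather than a detail, and no bridging argument about connecting up $I$ will make it go away.

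The paper sidesteps the extension problem entirely. It defines three local exchange operations (Type-I/II/III neighbors) on minimal solutions, each of which swaps out one edge and adds a small repair set before re-minimalising with $\comp{\cdot}$, and proves that the resulting directed solution graph $\mathcal G$ is strongly connected by showing that from any $X$ one can always move to a neighbor $Z$ with $\distor{Z}{Y} < \distor{X}{Y}$ for any target $Y$ (\Cref{lem:connected,lem:tree,lem:edom:scc}). Enumeration is then a graph traversal from one seed solution, with delay governed by the out-degree $\order{nm\Delta}$ times the $\order{m}$ cost per neighbor. If you want to salvage your approach you would need either a polynomial-time algorithm for the extension problem for minimal connected edge dominating sets, or a relaxation of the partition scheme that tolerates empty subproblems without losing polynomial delay; absent that, the proposal does not establish the corollary.
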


\noindent \textbf{Related works:} 
Approximative approaches to enumeration have been studied.
In the database area, a variant of $k$-best enumeration has been introduced by Fagin et al.~\cite{DBLP:journals/jcss/FaginLN03}, called \emph{$\theta$-approximation to the top $k$ answers}. 
Ajami et al.\  proposed a $\theta$-approximation order enumeration as a more strict constraint.
For a sequence of outputs $(S_1, S_2, \ldots, S_M)$, we say that it is a $\theta$-approximation order if $\size{S_i} \le \theta \size{S_j}$ for any $i < j$.
In this problem setting, Ajami et al.\ developed a polynomial-delay $\theta$-approximation order enumeration algorithm for weighted set covers~\cite{DBLP:conf/icde/AjamiC19}.

Another approach of approximative enumeration is introduced by Kobayashi et al.\ and Agrawal et al., independently.
This approach is called \emph{approximation enumeration}~\cite{Kobayashi:Efficient:2020,DBLP:conf/icalp/AgarwalHS021}.
In this problem setting, we enumerate all solutions with weight at most a given threshold $t$.
However, we allow outputting a solution with weight at most $c t$.
We call such an enumeration algorithm a $c$-approximation enumeration algorithm.
If we have a $\theta$-approximation ordered enumeration algorithm, then it is a $\theta$-approximation enumeration algorithm of the problem.
Thus, approximation enumeration is a relaxed version of approximation order enumeration.
For a designing $c$-approximation enumeration algorithm, the \emph{CKS property} is important. See \cite{DBLP:journals/corr/abs-2004-09885,Cohen::2008} for details.
As a remarkable result, Kobayashi et al.\ show that if the monotone property has the CKS property, then there is a polynomial delay $(c+1)$-approximation enumeration algorithm~\cite{Kobayashi:Efficient:2020}.
Unfortunately, connected edge domination does not have the CKS property.

\section{Preliminaries}\label{sec:prelim}
A graph $G = (V, E)$ is a pair of the set of vertices and the set of edges.
In this paper, we assume that $G$ is connected and simple, that is, $G$ has no-self loops and multiple edges.
Thus, $\size{E} \ge \size{V} - 1$ holds.
We denote $\size{V}$ and $\size{E}$ as $n$ and $m$, respectively.
For a vertex $v \in V$, a vertex $u$ is \emph{adjacent to $v$} if $E$ has an edge $\set{u, v}$.
A vertex $u$ is called a \emph{neighbor of $v$}.
For an edge $e = \set{u, v}$, $u$ and $v$ are \emph{endpoints of $e$} and $e$ is an \emph{incident edge of $u$ and $v$}.
We denote the neighbors of $v$ and the set of incident edges of $v$ as $N_G(v)$ and $\Gamma_G(v)$, respectively.
The \emph{degree of $v$} is defined as the cardinality of $N_G(v)$, and it is denoted as $d_G(v)$.
We say that $v$ is a \emph{pendant vertex} if $\size{N_G(v)}$ is equal to one.
Similarly, we say that the edge incident to a pendent is the \emph{pendant edge}.
Moreover, the degree of $G$ is defined as $\max_{v \in V}{d_G(v)}$, and it is denoted as $\Delta_G$.
If there is no confusion arises, we drop the subscript.
The set of vertices $N(u) \cup \set{u}$ is called the \emph{closed neighbor of $u$}, and it is denoted as $N[u]$.
Moreover, for a set of vertices $U$, we define the closed neighbor of $U$ as $\bigcup_{u \in U}N(u) \setminus U$.
Similarly, the set of vertices $N(U) \cup U$ is called a \emph{closed neighbor of $U$} and it is denoted as $N[U]$.

\revise{
    A sequence of vertices $P = (u_1, u_2, \ldots, u_k)$ is called a \emph{path} if $u_i$ is adjacent to $u_{i+1}$ for any $1 \le i < k$ and all the vertices are distinct.
    A sequence of vertices $C = (u_1, u_2, \ldots, u_k)$ is called a \emph{cycle} if $u_i$ is adjacent to $u_{i+1}$ for any $1 \le i \le k$, where $u_{k + 1}$ is considered as $u_1$, and all the vertices except for pair $\{u_1, u_{k+1}\}$ are distinct.
    We say that a graph $G$ is connected if, for any pair of vertices $u, v \in V(G)$, $G$ has a $u$-$v$ path.
    A graph $G$ is called a \emph{tree} if $G$ has no cycles.
}

\revise{
    A graph $H = (U, F)$ is called a \emph{subgraph of $G$} if $U \subseteq V$ and $F \subseteq E$.
    Moreover, if $U = \bigcup_{f \in F}f$, then $H$ is called an \emph{edge induced subgraph}. 
    We denote an edge induced subgraph $H = (U, F)$ as $G[F]$.
    If a subgraph $H$ is a tree, that is, $H$ has no cycles, then $H$ is called a \emph{subtree of $G$}.
    We call a graph $H = (U, F)$ of $G = (V, E)$ is an \emph{induced subgraph of $G$} if $U \subseteq V$ and $F = \inset{\set{u, v} \in E}{u, v \in U}$.
}

For edges $e$ and $f$, we say that \emph{$e$ dominates $f$} if $e$ and $f$ share one endpoint.
Similarly, for vertices $v$ and $u$, we say that \emph{$v$ dominates $u$} if $u$ is a neighbor of $v$.
A set of edges $F$ is called \emph{edge dominating set} if  any edge in $G$ is dominated by an edge in $F$. 
Similarly, for a set of vertices $U$, $U$ is called \emph{dominating set} if $N[U]$ is equal to $V$.
Moreover, $F$ is called \emph{connected edge dominating set} if $G[F]$ is connected and $F$ is an edge dominating set of $G$.
For a connected edge dominating set $F$, an edge $e = \set{u, v} \not\in F$ is a \emph{private edge} of $f \in F$ if $(\Gamma(u) \cup \Gamma(v)) \cap F = \set{f}$.
A connected edge dominating set $F$ is called a \emph{minimal connected edge dominating set} if for any $f \in F$, $F \setminus \set{f}$ is not a connected edge dominating set.
We can easily see that the following proposition holds. 

\begin{proposition}
Let $F$ be a connected edge dominating set. 
Then, $G[F]$ forms a tree if $F$ is minimal. 
\end{proposition}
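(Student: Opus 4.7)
The plan is to prove the contrapositive in the form: if $G[F]$ contains a cycle, then $F$ is not minimal. Connectivity of $G[F]$ is immediate from the definition of a connected edge dominating set, so the only thing to rule out is the existence of a cycle.

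First I would assume for contradiction that $G[F]$ contains a cycle $C$ and pick any edge $e = \{u,v\} \in C$. I would then argue that $F \setminus \{e\}$ is still a connected edge dominating set, which contradicts the minimality of $F$.

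For the connectivity of $G[F \setminus \{e\}]$, since $e$ lies on a cycle of $G[F]$, there is a $u$-$v$ path in $G[F]$ that avoids $e$; hence removing $e$ preserves connectivity. In particular, both $u$ and $v$ remain endpoints of edges in $F \setminus \{e\}$, so no vertices of $G[F]$ are lost in passing to the edge-induced subgraph $G[F \setminus \{e\}]$.

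For the edge-dominating property, I would take an arbitrary edge $f \in E$ and show $f$ is dominated by some edge in $F \setminus \{e\}$. Since $F$ dominates $f$, there is some $f' \in F$ with $f \cap f' \neq \emptyset$. If $f' \neq e$, we are done. Otherwise $f$ shares an endpoint with $e$, say $u \in f$. Because $e$ lies on the cycle $C$, the vertex $u$ is incident in $F$ to at least one other edge $e' \in C \setminus \{e\} \subseteq F \setminus \{e\}$, and this $e'$ dominates $f$ through the common endpoint $u$. Hence $F \setminus \{e\}$ is a connected edge dominating set strictly smaller than $F$, contradicting minimality. The only subtle point — and the one I would present most carefully — is the private-edge argument in this last step, namely that any edge whose only $F$-dominator was $e$ must share an endpoint with $e$ and is therefore rescued by one of the other cycle-edges of $C$ at that endpoint.
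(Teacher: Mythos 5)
Your proof is correct and complete: the paper states this proposition without proof (``We can easily see\dots''), and your argument is exactly the natural one it has in mind --- delete an edge $e=\{u,v\}$ on a cycle of $G[F]$, note that connectivity is preserved, and observe that any edge whose only dominator in $F$ was $e$ meets $e$ at $u$ or $v$ and is therefore still dominated by the other cycle edge of $F$ incident to that vertex. The private-edge step you flag as subtle is indeed the only point requiring care, and you handle it correctly.
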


\section{Enumeration of all minimal connected edge dominating sets}
To achieve polynomial-delay enumeration of $k$-best minimal connected edge dominating sets, 
we first give a polynomial-delay enumeration algorithm for ``all'' minimal connected edge dominating sets. 
In this paper, we assume that the cardinality of a minimum connected edge dominating set is at least two. 
If $G$ has a minimum connected edge dominating set with cardinality one, then the enumeration of minimal connected edge dominating sets can be done in $\order{\Delta m}$ time.

We first consider a trivial case. The cardinality of a minimum connected edge domination is one.
We can enumerate all solutions in $\order{m\Delta}$ time in this case.
Let $e^* = \set{x, y}$ be a minimum connected edge dominating set.
Since $e^*$ is an edge dominating set, any edge in $G$ is incident to at least one endpoint of $e^*$,
and every minimal connected edge dominating set contains at most one edge incident to each endpoint of $e^*$.
Such a graph has only three types of solutions as follows,
(I)   a solution $\set{e^*}$,
(II)  a solution that contains edges $f$ incident to $x$ and $g$ incident to $y$.
(III) a solution such that all edges in the solution are incident to either vertex $x$ or $y$.
The type-I is unique, the number of type-II solutions is at most $n$, and the number of type-III solutions is at most $2$.
Thus, we can enumerate all minimal connected edge dominating set in $\order{m\Delta}$ time.
See \Cref{fig:trivial} for an example.

\begin{figure}[t]
    \centering
    \begin{tikzpicture}

\tikzset{neigh node/.style={
circle, fill=white, draw=black, 
minimum size=#1, inner sep=0pt, outer sep=0pt}, 
neigh node/.default = 6pt}

\begin{scope}[scale=.7]
		\node [neigh node] (u)  at (1, 1.5) {};
		\node [neigh node] (v)  at (2, 1.5) {};

		\node [neigh node] (a)  at (-1, 0) {};
		\node [neigh node] (b)  at (0, 0) {};
		\node [neigh node] (c)  at (1, 0) {};
		\node [neigh node] (d)  at (2, 0) {};
		\node [neigh node] (e)  at (3, 0) {};

		\node [neigh node] (u')  at (7, 1.5) {};
		\node [neigh node] (v')  at (8, 1.5) {};

		\node [neigh node] (a')  at (5, 0) {};
		\node [neigh node] (b')  at (6, 0) {};
		\node [neigh node] (c')  at (7, 0) {};
		\node [neigh node] (d')  at (8, 0) {};
		\node [neigh node] (e')  at (9, 0) {};

		\node [neigh node] (u'')  at (13, 1.5) {};
		\node [neigh node] (v'')  at (14, 1.5) {};

		\node [neigh node] (a'')  at (11, 0) {};
		\node [neigh node] (b'')  at (12, 0) {};
		\node [neigh node] (c'')  at (13, 0) {};
		\node [neigh node] (d'')  at (14, 0) {};
		\node [neigh node] (e'')  at (15, 0) {};

    	\draw (u) -- node[above] {$e^*$} (v);
    	\draw (v) -- (a);
        \draw (v) -- (d);
        \draw (v) -- (e);
        \draw (u) -- (a);
        \draw (u) -- (b);
        \draw (u) -- (c);
        \draw (u) -- (e);
        \draw (u) -- (d);
        
    	\draw (u') -- node[above] {$e^*$} (v');
    	\draw[dotted, line width = 1.5pt] (v') -- (a');
        \draw[draw = gray, line width = 1.5pt] (v') -- (d');
        \draw[line width = 1.5pt] (v') -- (e');
        \draw[dotted, line width = 1.5pt] (u') -- (a');
        \draw (u') -- (b');
        \draw (u') -- (c');
        \draw[line width = 1.5pt] (u') -- (e');
        \draw[draw = gray, line width = 1.5pt] (u') -- (d');

    	\draw (u'') -- node[above] {$e^*$} (v'');
            \draw (v'') -- (a'');
        \draw (v'') -- (d'');
        \draw (v'') -- (e'');
        \draw[draw = gray, line width = 1.5pt] (u'') -- (a'');
        \draw (u'') -- (b'');
        \draw (u'') -- (c'');
        \draw[draw = gray, line width = 1.5pt] (u'') -- (e'');
        \draw[draw = gray, line width = 1.5pt] (u'') -- (d'');
\end{scope}
\end{tikzpicture}
    \caption{An example of a trivial case. The left graph has a minimum connected edge dominating set with cardinality one. 
    The type-II solutions are three minimal solutions with two edges represented by dotted edges, gray edges, and black edges in the middle graph. 
    \revise{The type-III solution is one solution with three edges represented by gray edges in the right graph.}}
    \label{fig:trivial}
\end{figure}

To design a polynomial-delay enumeration algorithm, 
we adopt \emph{supergraph technique}~\cite{Kobayashi:Efficient:2020,DBLP:journals/corr/abs-2105-04146} as the basic idea. 
In this technique, we define a directed graph $\mathcal G = (\mathcal V, \mathcal E)$, called a \emph{supergraph}, whose vertex set $\mathcal V$ consists of the set of all minimal connected edge dominating sets in an input graph.
If $\mathcal G$ is strongly connected, then by starting from an arbitrary solution, 
we can enumerate solutions by applying a standard graph traverse procedure on $\mathcal G$. 
Thus, the key of supergraph technique is how we define the edge set $\mathcal E$ of $\mathcal G$.

We define directed edges in $\mathcal G$ by three types of neighbors for each minimal connected edge dominating set $X$.
The common idea is simple.
We firstly remove an edge $e$ from $X$, next add several edges to $X \setminus \set{e}$, and then,
obtain a connected edge dominating set $X'$. 
To obtain a minimal connected edge dominating set that is contained in $X'$, we use $\comp{\cdot}$ defined as follows: 
$\comp{\cdot}$ is an arbitrary deterministic procedure 
such that $\comp{\cdot}$ receives a connected edge dominating set $X$ and outputs a minimal connected edge dominating set $Y$ contained in $X$. 
The following proposition shows $\comp{\cdot}$ can be computed in linear time. 
The following characterization of minimal edge connected dominating sets is important for $\comp{\cdot}$. 

\begin{proposition}\label{prop:pendant:has:private}
    Let $X$ be a connected edge dominating set of a graph $G$.
    Then, $X$ is minimal if and only if $G[X]$ is a tree and any pendant edge has at least one private edge.
\end{proposition}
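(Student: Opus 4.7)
The plan is to prove both directions of the biconditional, using the preceding proposition (that a minimal connected edge dominating set induces a tree) and the standing assumption $|X|\ge 2$. Pendant edges are understood in the edge-induced subgraph $G[X]$ throughout.

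For the forward direction, I would assume $X$ is minimal; by the preceding proposition $G[X]$ is a tree. Fix a pendant edge $f=\{u,v\}$ with $u$ the degree-one endpoint. Removing the leaf $u$ keeps $G[X\setminus\{f\}]$ connected as an edge-induced subgraph, so the minimality of $X$ must manifest as a domination failure: some edge $e$ is dominated by $X$ but not by $X\setminus\{f\}$. Since $|X|\ge 2$ and $G[X]$ is connected, the vertex $v$ has another $X$-edge, so $f$ itself is still dominated by $X\setminus\{f\}$; consequently $e\ne f$ and $e\notin X$. The only edge of $X$ touching an endpoint of $e$ is therefore $f$, matching the definition of $e$ being a private edge of $f$.

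For the backward direction, I would argue by contradiction: assume every pendant edge has a private edge and that some $f\in X$ is redundant in the sense that $X\setminus\{f\}$ is still a CEDS. I would first record the auxiliary observation that a private edge of $f$ forces an endpoint of $f$ to be of degree one in $G[X]$, so non-pendant edges carry no private edges. Since $G[X\setminus\{f\}]$ stays connected, $f$ is either pendant in $G[X]$ or lies on a cycle. If $f$ is pendant, a private edge given by the hypothesis would be left undominated by $X\setminus\{f\}$, an immediate contradiction. If $f$ lies on a cycle, then $f$ is non-pendant and has no private edge; iteratively stripping such non-pendant cycle edges reduces $X$ to a tree sub-CEDS $X'\subsetneq X$ while leaving every pendant edge of $G[X]$ intact, and applying the forward direction to a minimal CEDS contained in $X'$ produces a pendant-edge witness of non-minimality that can be transferred back to $G[X]$ to contradict the hypothesis.

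The main obstacle is the cycle case of the backward direction, since the pendant-edge hypothesis only directly controls pendant edges of $G[X]$: non-minimality may originate from an internal cycle edge, and the cycle-stripping reduction must be carried out while ensuring that the eventual pendant-edge violation is visible in the original $G[X]$ rather than only in an intermediate tree sub-CEDS. The clean way forward is to package the auxiliary observation ``a non-pendant edge has no private edge'' as a stand-alone remark and use it to both justify cycle stripping and transfer the witness, reducing the argument to the tree case anticipated by the preceding proposition.
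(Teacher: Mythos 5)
Your forward direction is correct and is essentially the paper's argument, just phrased directly instead of by contradiction: a pendant edge with no private edge can be deleted while preserving both connectivity and domination. You even handle a small point the paper glosses over, namely that the deleted edge itself remains dominated because $\size{X}\ge 2$.

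The backward direction, however, has a genuine gap, and it is exactly the cycle case you flag as the main obstacle. The ``cycle stripping plus transfer back'' plan cannot be completed, because the implication is simply false when $G[X]$ contains a cycle: take $G$ to be a $4$-cycle and $X$ all four of its edges. Then $G[X]$ has no pendant edges, so the hypothesis holds vacuously, yet $X$ is not minimal (any single edge can be removed). Your transfer step fails concretely on this example: after stripping, the tree sub-CEDS does omit one of \emph{its} pendant edges when reduced to a minimal CEDS, but that edge is not pendant in $G[X]$ (indeed $G[X]$ has no pendant edges at all), so no pendant-edge witness transfers back. The paper sidesteps the case by asserting that every non-pendant edge of $G[X]$ is a bridge, which amounts to silently assuming $G[X]$ is a tree; and indeed the backward direction is only ever invoked (in the proof of \Cref{lem:comp}) after passing to a spanning tree of $G[X]$, where that assumption is available. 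The correct repair is therefore not to handle cycles but to add acyclicity of $G[X]$ to the hypothesis (or to the point of invocation), after which your pendant case alone finishes the argument: a removable edge $f$ with $G[X\setminus\set{f}]$ connected must then be pendant, and its guaranteed private edge is left undominated.
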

\begin{proof}
    Suppose that $X$ is minimal.
    Since $X$ is a minimal solution, $G[X]$ is a tree. 
    Otherwise, we can remove an edge that is not a bridge from $X$.
    We assume that $X$ has a pendant edge $e$ having no private edges. 
    Then, $X\setminus \set{e}$ dominates all edges.
    Moreover, since $e$ is a pendant edge of $X$, $G[X\setminus\set{e}]$ is connected.
    It contradicts the assumption and completes the if part. 

    We next show the only if part. 
    We prove this by contradiction.
    Let $Y$ be a connected edge dominating set \revise{strictly contained in $X$, that is, $Y \subset X$}.
    If $Y$ contains all pendant edges in $X$, then $Y$ is not a tree since any non-pendant edge is a bridge. 
    Thus, $Y$ does not contain some pendant edge in $X$. 
    However, it contradicts that $Y$ is an edge dominating set. 
\end{proof}

The following lemma shows that $\comp{\cdot}$ can be done in linear time.

\begin{lemma}\label{lem:comp}
    Let $X$ be a connected edge dominating set.
    We can compute a minimal connected edge dominating set contained in $X$ in $\order{m}$ time.
\end{lemma}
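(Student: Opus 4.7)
The plan is to realize $\comp{\cdot}$ as a peeling procedure justified by \Cref{prop:pendant:has:private}: iteratively remove pendant edges of $G[X]$ that admit no private edge until none remains. Correctness follows immediately from the proposition once we verify that every such deletion preserves the connected edge dominating set property---removing a pendant edge cannot disconnect $G[X]$, and the absence of a private witness means no edge of $G$ was uniquely dominated by the removed edge.

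For the implementation I would maintain the degree $d(v)$ of every vertex in the current $G[X]$ together with a queue $Q$ of pendant edges awaiting inspection. After an $\order{m}$ initialization, the main loop pops $e = \set{u, v}$ with $d(u) = 1$, tests whether $e$ has a private edge, and either marks $e$ permanent or deletes $e$ from $X$, sets $d(u) \gets 0$, decrements $d(v)$, and enqueues the remaining $X$-edge at $v$ whenever $d(v)$ drops to $1$. Crucially, the ``has a private edge'' status is monotone under removals (the dominator set of each non-$X$ edge only shrinks), so no edge marked permanent can subsequently become removable.

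The observation that drives the $\order{m}$ bound is that the private-edge test at such an $e$ reduces to asking whether some $w \in N_G(u) \setminus \set{v}$ has $d(w) = 0$. Indeed, $u$'s only $X$-edge being $e$ forces any private witness incident to $u$ to be dominated solely through its other endpoint $w$, giving exactly this condition; and any edge incident to $v$ is already dominated by one of the remaining $X$-edges at $v$, since $d(v) \geq 2$ throughout the algorithm (the case $d(v) = 1$ cannot arise, as $X$ remains connected and the paper's standing assumption forces $\size{X} \geq 2$). Hence a single scan of $\Gamma_G(u)$ with the stored degrees suffices.

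Since degrees in $G[X]$ only decrease, each vertex serves as the leaf endpoint of at most one inspection, so the total inspection cost is bounded by $\sum_{v} d_G(v) = \order{m}$; bookkeeping contributes another $\order{m}$. I expect the main obstacle to be precisely this charging argument---a naive per-edge accounting yields only $\order{m\Delta}$---so the crux is to reduce the private-edge test to a single neighborhood scan at the leaf and then exploit degree monotonicity to amortize across the algorithm.
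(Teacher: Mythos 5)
Your peeling procedure operates directly on $G[X]$, but the paper's $\comp{\cdot}$ must handle inputs where $G[X]$ contains cycles (the sets $(X\setminus\set{e})\cup\set{f,g}$ to which it is applied in the neighbor definitions routinely do), and there your algorithm fails: a procedure that only ever deletes pendant edges can never destroy a cycle, yet a minimal connected edge dominating set is always a tree. Concretely, take $G$ to be a triangle on $a,b,c$ with one extra pendant vertex attached to each of $a$, $b$, $c$, and let $X$ be the three triangle edges. Then $G[X]$ has no pendant edges at all, your loop terminates immediately and returns $X$, but $X$ is not minimal, since $\set{\set{a,b},\set{b,c}}$ is a connected edge dominating set properly contained in it. The same example shows that correctness does not follow ``immediately from \Cref{prop:pendant:has:private}'': the direction asserting that $X$ is minimal whenever every pendant edge has a private edge is really a statement about trees (its proof uses that every non-pendant edge is a bridge), and the paper accordingly begins by replacing $X$ with a spanning tree of $G[X]$ before peeling. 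That $\order{m}$ preprocessing step is the idea you are missing; once it is added, every non-pendant edge genuinely is a bridge and hence unremovable, and your peeling argument goes through.

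The rest of your proposal is sound and in places slightly sharper than the paper's write-up: reducing the private-edge test at a leaf $u$ to a scan of $N_G(u)$ for a vertex of current $X$-degree zero, and charging that scan to the unique moment $u$ becomes a leaf, is exactly the $\sum_{v} d_G(v) = \order{m}$ accounting the paper uses (each pendant edge examined once, in $\order{d(\ell)}$ time). Your observations that removability is monotone under deletions and that the non-leaf endpoint keeps degree at least two are correct under the paper's standing assumption that the minimum solution has cardinality at least two. But without the spanning-tree step the algorithm returns a non-minimal set on cyclic inputs, so the gap is essential rather than cosmetic.
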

\begin{proof}
    We first compute a spanning tree $T$ of $G[X]$. 
    Clearly, $E(T)$ is a connected edge dominating set in $G$ and can be obtained in linear time. 
    We show that, for any subset $T' \subseteq T$, if $T'$ is a connected edge dominating set, then
    $T'$ contains all non-pendant edges in $T$. 
    For any \revise{non-pendant} edge $e \in T$, $G[T \setminus \set{e}]$ has two connected components $C_1$ and $C_2$.
    Both $E(C_1)$ and $E(C_2)$ are not edge dominating sets for $G$ since any edge in $C_1$ is not dominated by $C_2$ and vice versa.
    Thus, we cannot remove any non-pendant edge.

    In the following, we determine whether we can remove from $T$ an edge $e$ with a leaf of $T$ as an endpoint.
    Let $\ell$ be a pendant vertex in $G[X]$ that has $e$ as an incident edge.
    If the other endpoint of any edge in $\Gamma_G(\ell)$ is in $V(T)$, 
    that is, $e$ has no private edges, 
    then
    we remove $e$ from $T$ and set $T = T \setminus \set{e}$. 
    Otherwise, we keep $e$ in $T$. 
    This can be done in $\order{d(\ell)}$ time. 
    We repeat this procedure until all pendant edges in $T$ are examined. 
    Note that any pendant edge is examined at most once. 
    Any pendant edge in the resultant tree $T$ has a private edge. 
    Thus, by \Cref{prop:pendant:has:private}, $T$ is minimal and  the total time complexity is $\order{m}$. 
\end{proof}

Next, we define three types of neighbors of a minimal connected edge dominating set $X$. 
For any pendant edge $e$ in $G[X]$, we denote by $\wpp{X, e}$ the set of vertices such that these are endpoints of a private edge $h$ of $e$.

\begin{description}
    \item[Type-I neighbor:]
    Suppose that $e$ is a bridge in $G[X]$ and $f = \set{u, v}$ is an edge, where $u \in V(G[X\setminus\set{e}])$ and $v \notin V(G[X\setminus\set{e}])$. 
    Let $C_1$ and $C_2$ be the two connected components in $G[X \setminus\set{e}]$.
    Without loss of generality, $f$ incidents to $V(C_1)$ and $v$ is not contained in $V(G[X \setminus\set{e}])$.
    Let $P$ be the subset of $\Gamma_G(v)$ such that for each edge $g$ in $P$, the other endpoint of $g$ is in $V(C_2)$. 
    Then, for each $g \in P$, 
    we say that $\comp{(X\setminus \set{e}) \cup \set{f, g}}$ is a \emph{type-I neighbor of $X$ with respect to the edge pair $(e, f)$}.
    Note that $f$ may equal to $g$. 
    
    \item[Type-II neighbor:]
    Suppose that $e = \set{u ,v}$ is a pendant edge in $G[X]$, where 
    $v$ is a pendant vertex in $G[X]$.
    Let $\mathcal P$ be the set of all paths from $v$ to $V(G[X\setminus\set{e}])$ in $G$ with length at most two.
    For each $P \in \mathcal P$,
    we say $\comp{(X\setminus \set{e}) \cup E(P)}$ is a \emph{neighbor-II of $X$ with respect to $e$}.

    \item[Type-III neighbor:]
    Suppose that $e = \set{u ,v}$ is a pendant edge in $G[X]$ such that $G$ has no pendant edges incident to $v$, where 
    $v$ is a pendant vertex in $G[X]$.
    Let $F$ be an edge set obtained as follows: for each $w$ in $\wpp{X, e}$, 
    we pick an edge $f$ between $w$ and $V(G[X \setminus\set{e}])$ and add $f$ to $F$. 
    Then, we say that $\comp{(X\setminus\set{e}) \cup F}$ is the \emph{neighbor-III of $X$ with respect to $e$}.
\end{description}

\begin{figure}[t]
    \centering
    \begin{tikzpicture}
\tikzset{neigh node/.style={
circle, fill=white, draw=black, 
minimum size=#1, inner sep=0pt, outer sep=0pt}, 
neigh node/.default = 6pt}
\tikzset{base node/.style={inner sep=0pt, outer sep=0pt, minimum size=0pt}}
\tikzset{tree cover edge/.style={line width=3pt}}
\tikzset{added/.style={decorate, decoration=snake, line width=3pt}}
\tikzset{removed/.style={dashed, line width=2pt}}

\newcommand{\neighscale}{.5}
\newcommand{\neighfigtitle}[1]{\node[base node] (label) at (2, 5.5) {#1};}
\newcommand{\neighbasegraph}{
    \node[base node]  (u1) at (0, 1.5)   {};
    \node[base node]  (u2) at (1, 1.5)   {};
    \node[base node]  (u3) at (2, 1.5)   {};
    \node[base node]  (u4) at (3, 1.5)   {};
    \node[base node]  (v1) at (0, 3)     {};
    \node[base node]  (v2) at (1, 3)     {};
    \node[base node]  (v3) at (2, 3)     {};
    \node[base node]  (v4) at (3, 3)     {};
    \node[base node]  (w)  at (1.5, 4.5) {};
    \node[base node]  (x)  at (4, 4.5)   {};
    \node[base node]  (y)  at (4, 0)     {};
    \node[base node]  (z)  at (1.5, 0)   {};
    \foreach \u / \v in {z/u1, z/u2, z/u4, z/u3,
        v1/u1, v1/u2, v1/u3, v2/u1, v2/u2, v2/u4, v3/u2, v3/u3, v4/u3, v4/u4,
        w/v1, w/v2, w/v3, w/v4,
        x/w, y/x, z/y} 
    {
        \draw (\u) -- (\v);
    }
}

\begin{scope}[scale=\neighscale]
    \neighfigtitle{(a) Solution $X$}
    
    \neighbasegraph

    \foreach \u / \v in {w/x, x/y, y/z, z/u1, z/u2, z/u3, z/u4}
    {
        \draw[tree cover edge] (\u) -- (\v);
    }
    
    \foreach \u in {u1,u2,u3,u4,v1,v2,v3,v4,w,x,y,z} {
        \node[neigh node] at (\u) {}; 
    }
\end{scope}

\begin{scope}[scale=\neighscale, shift={(6,0)}]
    \node[base node] (label) at (2, 5.5) {(b) Type-I neig.}; 
    \neighbasegraph

    \foreach \u / \v in {w/x, z/u1, z/u2, z/u4, z/u3}
    {
        \draw[tree cover edge] (\u) -- (\v);
    }
    
    \foreach \u / \v in {w/v1, v1/u3}
    {
        \draw[draw=white, line width = 3pt] (\u) -- (\v);
        \draw[added] (\u) -- (\v);
    }
    \foreach \u / \v in {x/y}
    {
        \draw[draw=white, line width = 3pt] (\u) -- (\v);
        \draw[removed] (\u) -- (\v);
    }
    
    \foreach \u in {u1,u2,u3,u4,v1,v2,v3,v4,w,x,y,z} {
    \node[neigh node] at (\u) {}; 
    }
\end{scope}

\begin{scope}[scale=\neighscale, shift={(12,0)}]
    \node[base node] (label) at (2, 5.5) {(c) Type-II neig.}; 
    \neighbasegraph

    \foreach \u / \v in {w/x, x/y, y/z, z/u2, z/u3, z/u4}
    {
        \draw[tree cover edge] (\u) -- (\v);
    }
    
    \foreach \u / \v in {w/v1, v1/u1}
    {
        \draw[draw=white, line width = 3pt] (\u) -- (\v);
        \draw[added] (\u) -- (\v);
    }
    \foreach \u / \v in {u1/z}
    {
        \draw[draw=white, line width = 3pt] (\u) -- (\v);
        \draw[removed] (\u) -- (\v);
    }
    
    \foreach \u in {u1,u2,u3,u4,v1,v2,v3,v4,w,x,y,z} {
    \node[neigh node] at (\u) {}; 
    }
\end{scope}

\begin{scope}[scale=\neighscale, shift={(18,0)}]
    \node[base node] (label) at (2, 5.5) {(d) Type-III neig.}; 
    \neighbasegraph

    \foreach \u / \v in {w/x, x/y, y/z, z/u2, z/u3, z/u4}
    {
        \draw[tree cover edge] (\u) -- (\v);
    }
    \foreach \u / \v in {w/v1, w/v2}
    {
        \draw[draw=white, line width = 3pt] (\u) -- (\v);
        \draw[added] (\u) -- (\v);
    }
    
    \foreach \u / \v in {u1/z}
    {
        \draw[draw=white, line width = 3pt] (\u) -- (\v);
        \draw[removed] (\u) -- (\v);
    }
    
    \foreach \u in {u1,u2,u3,u4,v1,v2,v3,v4,w,x,y,z} {
    \node[neigh node] at (\u) {}; 
    }
\end{scope}
\end{tikzpicture}
    \caption{Examples for three types of neighbors of a minimal connected edge dominating set $X$. 
    (a) depicts $X$. Black bold lines indicate the edges in $X$. 
    (b) depicts a type-I neighbor $Y_1$ of $X$. 
    $Y_1$ is obtained by removing a dashed edge corresponding $e$ and adding a wavy path corresponding $\set{f, g}$. 
    The bottom edge is removed by applying $\comp{\cdot}$. 
    (c) and (d) are a type-II and a type-III neighbor of $X$, respectively. 
    Similarly, the neighbors are obtained by removing a dashed edge and adding wavy edges. 
   }
    \label{fig:trans}
\end{figure}

See \Cref{fig:trans} for examples. 
Note that every vertex in $\wpp{X, e}$ in the case when we are considering the Type-III neighbor has at least two neighbors.
Intuitively speaking, to obtain a type-I neighbor,  we first destroy the connectivity constraint, and then add a path $(f, g)$ so that the resultant edge set is connected again. 
Similarly, to obtain a type-II or type-III neighbor, we first destroy the domination constraint, and then add some edges so that the resultant edge set satisfies the domination constraint. 
By the following lemmas, the obtained edge sets are actually minimal connected edge dominating sets. 

The following auxiliary lemma is useful to show \Cref{lem:nei:edom}.
    \begin{lemma}\label{lem:pendant}
        Let $X$ be a minimal connected edge dominating set and  
        $e = \set{u ,v}$ be a pendant edge in $G[X]$ such that $v$ is a pendant vertex in $G[X]$.    
        Then, for any vertex $w$ in $\wpp{X, e}$, $\Gamma_G(w)$ has an edge that connects $w$ and $V(G[X \setminus \set{e}])$.
    \end{lemma}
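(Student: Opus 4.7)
The plan is to fix the pendant edge $e = \set{u, v}$ of $G[X]$ (with $v$ the pendant vertex), take an arbitrary $w \in \wpp{X, e}$, and let $h$ be a private edge of $e$ incident to $w$ which is not a pendant edge of $G$; such an $h$ exists by the very definition of $\wpp{X, e}$. The first assertion $d_G(w) \ge 2$ is then essentially immediate: if $d_G(w) = 1$ held, then the single edge $h$ incident to $w$ would be pendant in $G$, contradicting the choice of $h$.

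For the second assertion, my first step is to pin down the other endpoint $x$ of $h$. Unpacking the definition of a private edge yields $(\Gamma_G(w) \cup \Gamma_G(x)) \cap X = \set{e}$, and since $w \notin \set{u, v}$ this forces $\Gamma_G(w) \cap X = \emptyset$ and in turn $e \in \Gamma_G(x)$, so $x \in \set{u, v}$ and $x$ has degree exactly one in $G[X]$. Using that $G[X]$ is a tree (\Cref{prop:pendant:has:private}) of size at least two under the standing assumption, I argue that $u$ cannot also be pendant in $G[X]$, since otherwise both endpoints of $e$ would be pendant and $G[X]$ would consist of $e$ alone; hence $x = v$ and $h = \set{w, v}$.

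With $h = \set{w, v}$ in hand I then exploit $d_G(w) \ge 2$ to pick a second incident edge $h' = \set{w, y}$ with $y \ne v$. Since $X$ is an edge dominating set but no edge of $X$ is incident to $w$, the $X$-edge dominating $h'$ must be incident to $y$; a short case split on $y = u$ versus $y \ne u$ then shows $y \in V(G[X \setminus \set{e}])$: when $y = u$ this uses that $u$ has degree at least two in $G[X]$, and when $y \ne u$ the $X$-edge incident to $y$ cannot be $e$ and therefore lies in $X \setminus \set{e}$. Either way, $h'$ is the required edge from $w$ to $V(G[X \setminus \set{e}])$. The only step I expect to need real care is the identification $x = v$, which is also the only place where the standing assumption $\size{X} \ge 2$ is genuinely used; everything else reduces to unpacking the definition of a private edge together with the edge-domination property of $X$.
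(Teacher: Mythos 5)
Your proof is correct and follows essentially the same route as the paper's: derive $d_G(w)\ge 2$ from the non-pendant private edge, then use the edge-domination property of $X$ (together with $\Gamma_G(w)\cap X=\emptyset$) to show a second edge at $w$ must reach $V(G[X\setminus\set{e}])$. You merely spell out two steps the paper leaves implicit — that the private edge at $w$ must be $\set{w,v}$ with $v$ the pendant vertex, and the case $y=u$ — which is a welcome tightening rather than a different argument.
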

    \begin{proof}
        Let $f$ be an edge $\set{v, w}$.
        Since $f$ is not a pendant edge, $\Gamma(w)$ has an edge $g \neq f$.
        Since $X$ is a connected edge dominating set and $e$ is a pendant edge, 
        $g$ is dominated by an edge in $X$ both of whose endpoints are not $v$. 
        Thus, $g$ connects $w$ and $V(G[X\setminus\set{e}])$. 
    \end{proof}

\begin{lemma}\label{lem:nei:edom}
    Let $X$ be a minimal connected edge dominating set and $Y$ be a neighbor of $X$. 
    Then, $Y$ is a minimal connected edge dominating set. 
\end{lemma}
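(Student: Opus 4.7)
My plan is to reduce the lemma to a statement about the intermediate edge set underlying each neighbor. Since every type-I, II, and III neighbor is defined as $Y = \comp{Z}$ for some edge set $Z$, Lemma~\ref{lem:comp} together with the specification of $\comp{\cdot}$ (which outputs a minimal connected edge dominating set contained in its argument) tells us that $Y$ is a minimal connected edge dominating set as soon as $Z$ is a (not necessarily minimal) connected edge dominating set. So it suffices to check, for each of the three types, that $Z$ is connected and dominates every edge of $G$.

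Connectivity of $Z$ will be immediate from the construction. For type-I, the added edges $\{f,g\}$ form a length-two path through $v$ that joins the two components $C_1$ and $C_2$ of $G[X\setminus\{e\}]$. For types II and III, $G[X\setminus\{e\}]$ is already connected because $e$ is pendant and removing a leaf edge of the tree $G[X]$ preserves connectivity; the added path $E(P)$, respectively the edges of $F$, then reattach the pendant vertex $v$, respectively each $w\in\wpp{X,e}$, to this tree.

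For the edge-domination property I would first establish a structural claim about private edges: any private edge of $e\in X$ must share an endpoint with $e$ that is pendant in $G[X]$. This follows directly from the definition of private edge and the fact that $G[X]$ is a tree. It has two useful consequences. First, any internal (non-pendant) edge of $G[X]$ has no private edge at all, so $X\setminus\{e\}$ already dominates $G$. Second, if $e=\{u,v\}$ is pendant with pendant vertex $v$ (using the standing assumption that a minimum CEDS has at least two edges, so that $u$ must carry another $X$-edge), then every private edge of $e$ has the form $\{v,c\}$ for a neighbor $c$ of $v$ lying outside $V(G[X])$. With this in hand, each type is handled quickly. In type-I, $e$ is internal since $C_1$ and $C_2$ are both non-trivial, so $X\setminus\{e\}$ already dominates and adding $\{f,g\}$ only enlarges the dominating set. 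In type-II, the first edge of $P$ is incident to $v$ and hence dominates every edge of the form $\{v,c\}$, covering all private edges of $e$. In type-III, the hypothesis that $v$ has no pendant edge in $G$ forces every neighbor $c$ of $v$ in $G$ to have degree at least two, so every private edge $\{v,c\}$ is non-pendant in $G$ and thus $c\in\wpp{X,e}$; the edge chosen for $c$ in $F$ (which exists by Lemma~\ref{lem:pendant}) then dominates $\{v,c\}$, while all remaining edges are still dominated by $X\setminus\{e\}$.

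The main obstacle I expect is careful bookkeeping of edge-induced vertex sets: in the pendant case $V(G[X\setminus\{e\}])$ does \emph{not} contain $v$, so I must verify that the added edges genuinely reach back to $v$ and that every neighbor of $v$ in $G$ is accounted for — either lying in $V(G[X\setminus\{e\}])$ and dominated by an existing $X$-edge, or outside it and handled as a private edge by the construction. Once the private-edge structural claim is in place, the three cases reduce to short and essentially parallel arguments.
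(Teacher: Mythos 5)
Your proposal is correct and follows essentially the same route as the paper's proof: show that the intermediate edge set built by each neighbor construction is a connected edge dominating set, then invoke the specification of $\comp{\cdot}$ (via \Cref{lem:comp}) to conclude minimality. The only difference is one of detail — your explicit structural claim that every private edge of $e$ is incident to the pendant endpoint of $e$ is left implicit in the paper's argument, but the underlying case analysis over the three neighbor types is the same.
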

\begin{proof}
    Suppose that $Y$ is a type-I neighbor of $X$ with respect to $(e, f)$. 
    Since $e$ is a bridge, $V(G[X])$ is equal to $V(G[X\setminus \set{e}])$. 
    Moreover, from the definition of $g$, $G[(X \setminus \set{e}) \cup \set{f, g}]$ is connected. 
    Thus, $(X \setminus \set{e}) \cup \set{f} \cup E(P)$ is a connected edge dominating set.  
    
    Suppose that $Y$ is a type-II neighbor of $X$ with respect to $e = \set{u, v}$, where $v$ is a pendant vertex in $G[X]$. 
    Since we add $E(P)$ that is incident to $v$, $V(G[(X\setminus \set{e}) \cup E(P)]) \supseteq V(G[X])$.
    Note that $E(P)$ may contain $e$. 
    Thus, $(X \setminus \set{e}) \cup E(P)$  is a connected edge dominating set. 
    
    Suppose that $Y$ is a type-III neighbor of $X$ with respect to $e$. 
    Let $h$ be a private edge of $e$. 
    From the definition of $Y$, there is an edge $g$ in $F$ such that $h$ and $g$ share a vertex in $\wpp{X, e}$. 
    Moreover, by \Cref{lem:pendant}, $g$ is dominated by $X \setminus \set{e}$. 
    Thus, $(X \setminus \set{e}) \cup F$  is a connected edge dominating set. 
    In each case, $Y$ is obtained by applying $\comp{\cdot}$ to a connected edge dominating set. 
    Hence, the lemma holds. 
\end{proof}

Let $\neigh{X}$ be the set of all types of neighbors of $X$. 
We say that $Y$ is a \emph{neighbor} of $X$ if $Y \in \neigh{X}$. 
Now, we define directed edges in $\mathcal G$ as follows: There is a directed edge $(X, Y)$ in $\mathcal G$ if and only if $Y \in \neigh{X}$. 
Note that $X \in \neigh{Y}$ may not hold. 
We next show the strong connectivity of $\mathcal G$.
To this end, for a pair of two minimal connected edge dominating sets $X$ and $Y$, we define a function $\distor{X}{Y}$ as the size of the union $\distor{X}{Y} \coloneqq \size{X \cup Y}$ of $X$ and $Y$. 
Intuitively speaking, this function expresses the ``closeness'' of $X$ and $Y$. 
The following proposition is easily derived but essential. 

\begin{proposition}\label{prop:close}
    For any pair of minimal connected edge dominating sets $X$ and $Y$, $\distor{X}{Y} = \size{X} = \size{Y}$ if and only if $X = Y$.    
\end{proposition}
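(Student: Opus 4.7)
The plan is to prove both directions using only elementary set-theoretic reasoning; no structural property of minimal connected edge dominating sets is needed here, since the statement is really a general fact about finite sets whose union has the same cardinality as each of them.

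For the forward ($\Leftarrow$) direction, I would simply unfold the definition of $\delta$: if $X = Y$, then $X \cup Y = X = Y$, whence $\distor{X}{Y} = \size{X \cup Y} = \size{X} = \size{Y}$.

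For the backward ($\Rightarrow$) direction, I would argue as follows. Assume $\size{X \cup Y} = \size{X} = \size{Y}$. Since $X \subseteq X \cup Y$ and $X$ is finite with the same cardinality as $X \cup Y$, we must have $X = X \cup Y$, which gives $Y \subseteq X$. By the symmetric argument applied to $Y \subseteq X \cup Y$ together with $\size{Y} = \size{X \cup Y}$, we likewise obtain $X \subseteq Y$. Combining these two inclusions yields $X = Y$.

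The main obstacle is essentially nonexistent: the claim is a tautology in finite set theory and serves only as a baseline lemma to be invoked later, where the real work -- exploiting minimality and connectivity -- will come in when showing that $\distor{\cdot}{\cdot}$ can be strictly decreased along some neighbor transition in $\mathcal{G}$, thereby establishing strong connectivity of the supergraph.
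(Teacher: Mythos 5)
Your proof is correct and is essentially the argument the authors intend: the paper omits the proof entirely, remarking only that the proposition is ``easily derived,'' and your finite-set argument (equal cardinality of $X$ and $X\cup Y$ forces $X = X\cup Y$, hence $Y\subseteq X$, and symmetrically) is the standard way to fill that gap. No structural properties of minimal connected edge dominating sets are needed, exactly as you observe.
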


Thus, to show the strong connectivity of $\mathcal G$, it is sufficient to show that
for any pair of minimal connected edge dominating sets $X$ and $Y$, 
there exists a minimal connected edge dominating set $Z$ such that there is a directed path from $X$ to $Z$ in $\mathcal G$ satisfying $\distor{X}{Y} > \distor{Z}{Y}$.
That is, $Z$ is ``closer'' to $Y$ than $X$ with respect to $\distor{\cdot}{\cdot}$. 
We consider the following two cases to show that such a $Z$ exists.
One is the case such that $G[X \cup Y]$ has at least one cycle and the other is the case such that $G[X \cup Y]$ has no cycles.
We consider the case such that $G[X \cup Y]$ has at least one cycle $C$.
We show that  $\mathcal G$ contains a directed edge $(X, Z)$ in this case.

Let $C$ be a cycle in $G[X \cup Y]$. Since both $X$ and $Y$ are trees, $C$ contains both edges in $X$ and $Y$. 
We show that $C$ can be decomposed into two parts as follows.

\begin{lemma}\label{lem:cycle}
    Let $X$ and $Y$ be minimal connected edge dominating sets. 
    Suppose that $G[X \cup Y]$ contains a cycle. 
    Then, $G[X \cup Y]$ has a cycle $C = \set{v_1, \dots, v_k}$ such that 
    (1) $C$ can be decomposed into two paths $P$ and $Q$ that satisfies $E(P) \subseteq X$, $E(Q) \subseteq Y$, and $E(Q) \cap X = \emptyset$.
    (2) the length of $Q$ is at most two, and  
    (3) any non-endpoint vertex of $Q$ is not in $V(G[X])$. 
\end{lemma}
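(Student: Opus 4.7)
My plan is to construct the required cycle $C$ directly by first locating a short path $Q$ contained in $Y$ whose endpoints lie in $V(G[X])$ but whose internal vertices do not, and then closing $C$ by any path $P$ in $X$ between those endpoints. Such a $P$ is guaranteed to exist because $G[X]$ is connected by the definition of a connected edge dominating set, and since $G[X]$ is in fact a tree (by the earlier proposition), $P$ is uniquely determined and simple.

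The crucial preliminary observation I will use is that $V(G[X])$ is a vertex cover of the graph $G[Y]$. Because $X$ is an edge dominating set of $G$, every edge of $G$---in particular every edge of $Y$---shares an endpoint with some edge of $X$, so at least one endpoint of each $Y$-edge lies in $V(G[X])$. Consequently, every vertex $w \in V(G[Y]) \setminus V(G[X])$ has all of its $Y$-neighbors inside $V(G[X])$.

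The argument then splits into two cases. In the first case, some vertex $w \in V(G[Y]) \setminus V(G[X])$ has degree at least two in $G[Y]$; then I pick any two of its $Y$-neighbors $u_1, u_2 \in V(G[X])$ and take $Q = (u_1, w, u_2)$, which together with the unique $X$-path $P$ from $u_1$ to $u_2$ in the tree $G[X]$ produces a simple cycle with $Q$ of length two whose only internal vertex $w$ lies outside $V(G[X])$, as required. In the second case, every vertex outside $V(G[X])$ is a pendant of $G[Y]$, so no cycle in $G[X \cup Y]$ can pass through such a vertex; hence the assumed cycle lies entirely inside $V(G[X])$. Since $G[X]$ is a tree, this cycle must contain at least one edge $e = \set{u, v} \in Y \setminus X$, and by the vertex-cover observation both $u$ and $v$ lie in $V(G[X])$. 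Taking $Q = (u, v)$ and $P$ the unique $X$-path between $u$ and $v$ then yields a simple cycle of the required form, with $Q$ of length one and vacuously no internal vertex in $V(G[X])$.

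The point that I expect to require the most care is the second case: one has to combine the vertex-cover observation with the tree structure of $G[X]$ to argue that any cycle of $G[X \cup Y]$ confined to $V(G[X])$ necessarily uses an edge of $Y \setminus X$ with both endpoints inside $V(G[X])$, since a cycle made entirely of $X$-edges would contradict the fact that $G[X]$ is a tree. Once this is established, verifying that $P+Q$ is a simple cycle and that conditions (1)--(3) all hold is immediate from the construction in each case.
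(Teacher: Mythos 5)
Your proof is correct, but it reaches the conclusion by a genuinely different route from the paper. The paper starts from the assumed cycle, decomposes it into maximal alternating segments lying in $X$ and in $Y$, replaces the $X$-side by a path inside $G[X]$, and then \emph{iteratively shortens} the resulting cycle: whenever the $Y$-segment $Q$ has length greater than two, its middle edge must be dominated by $X$, so $Q$ has an internal vertex in $V(G[X])$ and the cycle can be split again. You instead isolate the single observation that $V(G[X])$ is a vertex cover of $G[Y]$ and then construct the desired cycle in one shot: either some vertex $w \in V(G[Y]) \setminus V(G[X])$ has two $Y$-neighbours, giving a length-two $Q$ through $w$ closed up by the unique $u_1$-$u_2$ path in the tree $G[X]$, or every such vertex is pendant in $G[Y]$, in which case the hypothesized cycle is confined to $V(G[X])$, must use an edge of $Y \setminus X$ (as $G[X]$ is acyclic), and that edge alone serves as $Q$. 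Your argument is non-iterative, makes the role of the domination property more transparent, and in the first case does not even need the hypothesis that $G[X \cup Y]$ contains a cycle; the paper's descent argument, on the other hand, keeps the final cycle inside the originally given one, which is closer in spirit to how the lemma is later applied (removing a specific edge of $X \setminus Y$ lying on a cycle of $G[X \cup Y]$). Since \Cref{lem:edge:in:cycle:is:removable} only invokes the existence of some cycle of the stated form, your version suffices for the downstream use.
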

\begin{proof}
    Without loss of generality, 
    $v_1$ is contained in $V(G[X])$ and 
    $C$ can be decomposed into the sequence of paths $(R_1, \ldots, R_\ell)$ such that 
    (1) all edges in $R_i$ are contained in $X$ if $i$ is odd, and  
    (2) all edges in $R_i$ are contained only in $Y$ if $i$ is even. 
    We consider $R_2$.
    \revise{
        If the length of $R_2$ is one, that is $R_2 = \set{r_1, r_2}$,
        we obtain a desired cycle by combining $r_1$-$r_2$ path in $G[X]$.
        
        Suppose that the length of $R_2 = (r_1, \ldots, r_\ell)$ is at least two.
        Since $E(R_2) \subseteq Y$, $E(R_2) \cap X = \emptyset$, and $X$ is an edge dominating set, 
        $X$ has an edge that is incident to $r_3$.
        Otherwise, an edge $\set{r_2, r_3}$ is not dominated by $X$, which contradicts that $X$ is an edge dominating set. 
        Therefore, $G[X]$ contains $r_1$ and $r_3$.
        Moreover, since $G[X]$ is connected, $G[X]$ has a $r_1$-$r_3$ path $P$.
        We obtain a cycle $C'$ by combining a $P$ and a path $(r_1, r_2, r_3)$.
        This cycle satisfies the condition (1) and (2).
        Finally, since $E(R_2) \cap X = \emptyset$, $G[X]$ does not contain $r_2$ and $C'$ is a desired cycle.
    }
\end{proof}

From the above lemma, $C$ can be decomposed into two paths $P$ and $Q$.
Since the length of $Q$ is at most two, 
we can show that $X$ has a type-I neighbor or a type-II neighbor that is closer to $Y$ when an edge in $E(P)$ is selected.

\begin{lemma}\label{lem:edge:in:cycle:is:removable}
    Let $X$ and $Y$ be a pair of minimal connected edge dominating sets such that $X \neq Y$.
    Suppose that $G[X \cup Y]$ has a cycle $C$. 
    For any edge $e$ in $(E(C) \cap X) \setminus Y$, 
    $X$ has a neighbor $Z$ contained in $(X \cup Y) \setminus \set{e}$.
\end{lemma}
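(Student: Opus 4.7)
\smallskip
\noindent\textbf{Proof plan.} The plan is to locate, inside the given cycle $C$, a very short sub-path crossing between the two sides of $G[X] - e$ that can be substituted for $e$. Because $X$ is a minimal connected edge dominating set, the proposition from Section~\ref{sec:prelim} gives that $G[X]$ is a tree, so $e$ is a bridge; let $T_1, T_2$ be the two subtrees obtained by deleting $e$, with vertex sets $V_1, V_2$, and write $e = \{u_1, u_2\}$ with $u_i \in V_i$. The path $\pi := C - e$ lies in $G[X \cup Y]$ and runs from $u_1$ to $u_2$.

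Walking along $\pi$ from $u_1$, I would let $a$ be the last vertex on $\pi$ that lies in $V_1$ and let $b$ be the first vertex strictly after $a$ that lies in $V_2$; both exist because $\pi$ ends at $u_2 \in V_2$. By construction every internal vertex $w$ of the sub-path $\pi'$ from $a$ to $b$ lies outside $V(G[X]) = V_1 \cup V_2$. The key structural fact to prove is that any such $w$ has all of its $G$-neighbors inside $V(G[X])$: an edge from $w$ to another outside vertex cannot be in $X$ (since $X$-edges have both endpoints in $V(G[X])$) and would then be undominated by $X$, contradicting that $X$ is an edge dominating set. Applied to two consecutive internal vertices, this rules out $|\pi'| \ge 3$, so $\pi'$ has length one or two; moreover every edge of $\pi'$ lies in $Y$, since edges touching internal vertices are forced out of $X$ and, in the single-edge subcase $\pi' = \{a, b\}$, the only $X$-edge with one endpoint in each of $V_1, V_2$ in the tree $G[X]$ is $e$ itself.

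The construction of $Z$ then splits according to whether $e$ is pendant in $G[X]$. If $e$ is non-pendant, both $T_1$ and $T_2$ contain at least one $X$-edge other than $e$, and I would invoke the type-I construction for the bridge $e$ using the $Y$-edges of $\pi'$ as the added edges, taking $g = f$ or an $X$-edge of $T_2$ incident to $b$ in the degenerate case $|\pi'|=1$. If instead $e$ is pendant at some endpoint $v$, then $V_1 = \{v\}$ forces $a = v$, and $\pi'$ is precisely a path of length at most two from $v$ to $V(G[X \setminus \{e\}])$, matching the input required by the type-II construction. In both cases the set fed to $\mu(\cdot)$ is contained in $(X \setminus \{e\}) \cup E(\pi') \cup (X \setminus \{e\}) \subseteq (X \cup Y) \setminus \{e\}$, so by \Cref{lem:nei:edom} the resulting neighbor $Z$ of $X$ lies in $(X \cup Y) \setminus \{e\}$. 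I expect the main delicate point in the write-up to be the reconciliation of the type-I definition in the $|\pi'|=1$ subcase, where the definition nominally requires a second added edge $g$ although one edge already suffices to restore connectivity.
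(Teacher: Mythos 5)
Your proof is correct and follows essentially the same route as the paper's: you isolate a crossing subpath of length at most two consisting of $Y$-edges (forced by the fact that $X$ dominates every edge, so no two consecutive cycle vertices can lie outside $V(G[X])$), and then apply the type-I construction when $e$ is non-pendant and the type-II construction when $e$ is pendant, so that the set handed to $\comp{\cdot}$ is contained in $(X \cup Y)\setminus\set{e}$ and \Cref{lem:nei:edom} finishes the argument. The only difference is that you re-derive the short-crossing-path structure (the content of \Cref{lem:cycle}) inline for the given cycle $C$ and the given edge $e$, which is if anything slightly more faithful to the statement's quantification over $e$ than the paper's proof, which invokes \Cref{lem:cycle} to obtain a possibly different cycle and then assumes $e$ lies on its $X$-part; your handling of the degenerate $g=f$ case of the type-I definition matches the paper's intended reading.
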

\begin{proof}
    Let $C$ be a cycle in $G[X \cup Y]$ that satisfies the conditions in \Cref{lem:cycle}. 
    Thus, $C$ can be respectively decomposed into two paths $P$ and $Q$ contained in $X$ and $Y$, 
    and the length of $Q$ is at most two. 
    In what follows, we assume that $e$ is an edge in $P$. 
    
    Suppose that $e$ is a bridge in $G[X]$.
    Let $f$ be an edge in $Q$. 
    Note that $f$ is incident to one of the connected components in $G[X \setminus \set{e}]$. 
    From the definition of a type-I neighbor, 
    $X$ has a neighbor $Z = \comp{(X\setminus\set{e}) \cup \set{f} \cup (E(Q) \setminus \set{f})}$.
    Since $Q$ is contained in $Y$, $Z$ is a subset of $(X \cup Y) \setminus \set{e}$.
    
    Suppose that the remaining case, that is, $e = \set{u, v}$ is a pendant edge in $G[X]$. 
    Without loss of generality, $v$ is a pendant vertex in $G[X]$.
    Then, $Q$ contains an edge incident to $v$.
    Hence, from the definition of a type-II neighbor and $Q \subseteq Y$, 
    $Z = \comp{(X\setminus \set{e}) \cup E(Q)}$ is a neighbor of $X$, 
    and $Z$ is a subset of $(X \cup Y) \setminus \set{e}$. 
\end{proof}

From the above lemma, we can immediately obtain the following lemma. 

\begin{lemma}\label{lem:connected}
    Let $X$ and $Y$ be a pair of minimal connected edge dominating sets such that $X \neq Y$.
    Suppose that $G[X \cup Y]$ has a cycle. 
    Then, $X$ has a neighbor $Z$ that satisfies $\distor{X}{Y} > \distor{Z}{Y}$. 
\end{lemma}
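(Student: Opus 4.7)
The plan is to derive the lemma essentially as a direct corollary of \Cref{lem:edge:in:cycle:is:removable}, with the only real work being to produce a specific edge $e \in (E(C) \cap X) \setminus Y$ to which that lemma can be applied. Once we have such an edge, the neighbor $Z$ guaranteed by \Cref{lem:edge:in:cycle:is:removable} sits inside $(X \cup Y) \setminus \set{e}$, and the union $Z \cup Y$ will be strictly smaller than $X \cup Y$, which is exactly what the distance bound $\distor{X}{Y} > \distor{Z}{Y}$ demands.

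First, I would invoke \Cref{lem:cycle} to obtain a cycle $C$ in $G[X \cup Y]$ that decomposes into two paths $P \subseteq X$ and $Q \subseteq Y$ with $\size{E(Q)} \le 2$. The key observation is that $P$ cannot be entirely contained in $Y$: if it were, then $E(P) \cup E(Q) = E(C)$ would be a cycle in $G[Y]$, contradicting the fact that $Y$ is minimal and hence $G[Y]$ is a tree by the proposition preceding \Cref{prop:pendant:has:private}. Thus there exists an edge $e \in E(P) \setminus Y$, which, since $P \subseteq X$ and $E(P) \subseteq E(C)$, lies in $(E(C) \cap X) \setminus Y$.

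Next, I would apply \Cref{lem:edge:in:cycle:is:removable} with this edge $e$ to obtain a neighbor $Z \in \neigh{X}$ with $Z \subseteq (X \cup Y) \setminus \set{e}$. Because $e \notin Y$, taking the union with $Y$ preserves the exclusion of $e$, so $Z \cup Y \subseteq (X \cup Y) \setminus \set{e}$. Therefore
\[
\distor{Z}{Y} = \size{Z \cup Y} \le \size{X \cup Y} - 1 < \size{X \cup Y} = \distor{X}{Y},
\]
which is exactly the conclusion of the lemma.

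The only potential subtlety I foresee is the argument that some edge of the cycle $C$ really lies in $X \setminus Y$, but this is settled quickly by the tree structure of $G[Y]$. Everything else is just bookkeeping on the set inclusion $Z \cup Y \subseteq (X \cup Y) \setminus \set{e}$ and the definition of $\distor{\cdot}{\cdot}$, so no further case analysis is needed.
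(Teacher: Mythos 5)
Your proof is correct and follows exactly the route the paper intends: the paper states \Cref{lem:connected} as an immediate consequence of \Cref{lem:edge:in:cycle:is:removable} (with no written proof), and your argument supplies precisely the missing bookkeeping, including the one nontrivial detail that $(E(C)\cap X)\setminus Y \neq \emptyset$ because $G[Y]$ is acyclic. Nothing further is needed.
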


Next, we need to show the remaining case that $G[X \cup Y]$ has no cycles.
Note that if $G[X \cup Y]$ has at least two connected components, it contradicts that both $X$ and $Y$ are connected edge dominating sets since every edge $e \in X$ is not dominated by $Y$.
Thus, $G[X \cup Y]$ is a tree.
Let $e$ be an edge $X\setminus Y$.
The following lemma shows that $e$ is a pendant edge in $G[X]$. 
\begin{lemma}\label{lem:XY:pendant}
    Let $X$ and $Y$ be minimal connected edge dominating sets.
    If $G[X \cup Y]$ is a tree, then any edge in $X \setminus Y$ is a pendant edge in $G[X]$.
\end{lemma}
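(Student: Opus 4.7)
The plan is to argue by contradiction: assume $e = \{u,v\} \in X \setminus Y$ is \emph{not} a pendant edge in $G[X]$, and derive a violation of the fact that $Y$ is an edge dominating set. Since $G[X]$ is a tree by the proposition stated in the preliminaries, every edge of $X$ is a bridge of $G[X]$; if $e$ is not pendant in $G[X]$, both endpoints $u$ and $v$ have degree at least two in $G[X]$, so $X$ contains further edges $e_u = \{u, u'\}$ and $e_v = \{v, v'\}$ with $u' \neq v$ and $v' \neq u$.

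Next I would exploit the tree structure of $G[X \cup Y]$. Removing the edge $e$ from $G[X \cup Y]$ splits it into exactly two subtrees $T_u$ and $T_v$, containing $u$ and $v$ respectively, with disjoint vertex sets. Because $e_u$ and $e_v$ are edges of $G[X \cup Y]$ other than $e$, each lies entirely inside one of the two subtrees; the endpoints force $e_u \in E(T_u)$ and $e_v \in E(T_v)$, and in particular $u' \in V(T_u)$ and $v' \in V(T_v)$.

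The key step is to locate $Y$ inside this split. Since $e \notin Y$, every edge of $Y$ belongs to $E(T_u) \cup E(T_v)$, and since $G[Y]$ is connected while $V(T_u) \cap V(T_v) = \emptyset$, the set $Y$ lies entirely inside one of the two subtrees, say $Y \subseteq E(T_u)$ without loss of generality. Then $V(G[Y]) \subseteq V(T_u)$ and hence $V(G[Y])$ is disjoint from $\{v, v'\} \subseteq V(T_v)$. But $e_v = \{v, v'\}$ is an edge of $G$ that must be dominated by $Y$, which requires some edge of $Y$ to be incident to $v$ or to $v'$; this is impossible.

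The only subtle point is to be careful that $e_u$ and $e_v$ really are distinct from $e$ and lie in the correct component after removing $e$; this is exactly where the \emph{non-pendant} assumption on $e$ is used (guaranteeing the existence of such $e_u$ and $e_v$), and where the tree structure of $G[X \cup Y]$ is used (forcing each of $e_u, e_v$ into a single component). The rest is a straightforward use of the domination constraint applied to $e_v$ (or, symmetrically, $e_u$ in the other case).
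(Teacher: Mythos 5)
Your proof is correct, but it is organized differently from the paper's. The paper chains local deductions: it uses domination of $e$ to place a $Y$-edge at (WLOG) $v$, uses acyclicity of $G[X\cup Y]$ to forbid a $Y$-edge at $u$, then uses domination of the second $X$-edge $f=\{u,w\}$ to force a $Y$-edge at $w$, and finally uses connectivity of $G[Y]$ to produce a $w$--$v$ path that closes a cycle with $e$ and $f$ --- the contradiction is with $G[X\cup Y]$ being a tree. You instead make a single global cut argument: deleting $e$ from the tree $G[X\cup Y]$ yields two components, connectivity of $G[Y]$ confines $Y$ to one of them, and the $X$-edge $e_v$ guaranteed on the other side by non-pendancy is then undominated --- the contradiction is with $Y$ being an edge dominating set. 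The ingredients (non-pendancy giving extra $X$-edges at both endpoints, connectivity of $Y$, domination, and the tree hypothesis) are identical, but your cut-based formulation avoids the paper's case analysis and the explicit cycle construction, and is arguably the cleaner of the two; the paper's version has the minor advantage of not needing to invoke the component decomposition explicitly. Both are complete and correct.
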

\begin{proof}
    We prove the statement of the lemma using contradiction
    Suppose that $e = \set{u, v} \in X \setminus Y$ is not a pendant edge in $G[X]$. 
    Without loss of generality, $Y$ has an edge incident to $v$.
    Since $G[X\cup Y]$ is a tree, $Y$ has no edges incident to $u$. 
    However, $X$ has an edge $f = \set{u, w}$ incident to $u$ since $e$ is not a pendant edge in $G[X]$.
    Since $Y$ is an edge dominating set, $Y$ has an edge $\set{w, x}$ incident to $w$.
    Since $Y$ is a connected edge dominating set, $G[Y]$ has a $w$-$v$ path that does not contain $e$
    Thus, $G[X \cup Y]$ has at least two $w$-$v$ paths.
    it contradicts that $G[X \cup Y]$ is a tree, and the statement holds.
\end{proof}

Let $Z'$ be the type-III neighbor of $X$ with respect to $e \in X \setminus Y$ and $w$ be a pendant vertex in $G[X]$ that has $e$ as an incident edge.
Since $G[X \cup Y]$ is a tree and $Y$ is an edge dominating set, $w$ has no pendant edges in $G$.
Otherwise, it contradicts that $Y$ is an edge dominating set.
Thus, we can pick $e$ as an edge for obtaining a type-III neighbor of $X$.
However, $\distor{Z'}{Y} \ge \distor{X}{Y}$ may hold since the added edge set $F \subseteq Y$ may not holds. 
In \Cref{lem:tree}, we show that we can remove edges in $F \setminus Y$ from $Z'$ by repeatedly obtaining type-I or type-II neighbors. 

The outline of our proof of \Cref{lem:tree} is as follows. 
We first consider a minimal connected edge dominating set $T \subseteq (X \cup Y \cup F) \setminus \set{e}$. 
We show that $T$ has a type-I or type-II neighbor $T' \subseteq (X \cup Y \cup (F \cap T)) \setminus \set{e, f}$, where $f$ is an edge in $F \setminus Y$.
If the above claim holds, then there exists a minimal connected edge dominating set $Z$
that satisfies $\distor{Z}{Y} < \distor{X}{Y}$
by applying the claim recursively.
Our main idea to show the claim is to show that either $G[T \cup X]$ or $G[T \cup Y]$ has a cycle that contains $f$.
The following auxiliary lemmas are useful to show the key lemma.

\begin{lemma}\label{lem:pendant:cycle}
    Let $X$ and $Y$ be two minimal connected edge dominating sets, $v$ be a pendent vertex in $G[X]$,
    $e_X$ be a pendant edge $\set{v, a}$ in $G[X]$, and $e_Y$ be an edge $\set{v, b}$ in $G[Y]$.
    If $e_X \in X \setminus Y$ and $e_Y \in Y \setminus X$, then $G[X \cup Y]$ contains at least one cycle that contains $e_X$.
\end{lemma}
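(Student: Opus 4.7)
The goal reduces to showing that $e_x$ is not a bridge in $G[X \cup Y]$; equivalently, that $G[(X \cup Y) \setminus \{e_x\}]$ still contains an $a$-$v$ path $P$. Closing such a $P$ with $e_x$ immediately yields a cycle of $G[X \cup Y]$ containing $e_x$. So the plan is to construct such a $P$ by exhibiting a walk from $a$ to $v$ that avoids $e_x$.

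First, I would argue that $a$ has a second $X$-edge. By the standing assumption that minimum connected edge dominating sets have cardinality at least two, $|X| \ge 2$. Since $X$ is a tree (by the proposition before Section 3) and $v$ is a leaf of $G[X]$, removing the pendant edge $e_x$ from $X$ isolates $v$ but leaves a connected subtree on $V(G[X]) \setminus \{v\}$. In particular, $a$ has some other incident $X$-edge $e' = \{a, a'\}$ with $a' \ne v$ (since $v$'s only $X$-neighbor is $a$), and any two vertices of $V(G[X]) \setminus \{v\}$ remain connected in $G[X \setminus \{e_x\}]$.

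Next, I would invoke that $Y$ is an edge dominating set: $Y$ must dominate $e'$, so $Y$ contains an edge incident to $a$ or to $a'$. Let $w \in \{a, a'\}$ be a vertex that lies in $V(G[Y])$; note $w \in V(G[X]) \setminus \{v\}$. Because $Y$ is connected and both $w$ and $v$ belong to $V(G[Y])$ (the latter via $e_y$), there is a path $R$ in $Y$ from $w$ to $v$; since $e_x \notin Y$, the path $R$ avoids $e_x$. From the first step, there is also a path $Q$ from $a$ to $w$ contained in $X \setminus \{e_x\}$ (trivial when $w = a$). Concatenating $Q$ and $R$ gives a walk from $a$ to $v$ in $G[(X \cup Y) \setminus \{e_x\}]$, which contains a simple $a$-$v$ path $P$; closing $P$ with $e_x$ yields the desired cycle.

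The one nontrivial step is guaranteeing the auxiliary edge $e' \in X$ at $a$, which is why the cardinality-two assumption on minimum connected edge dominating sets is used. The remaining steps are routine once edge domination of $e'$ by $Y$ anchors some vertex of $Y$ at $a$ or at a neighbor of $a$ in $X$.
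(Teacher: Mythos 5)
Your proof is correct and follows essentially the same route as the paper's: both hinge on $a$ having a second $X$-edge, using edge domination of that edge by $Y$ to anchor $Y$ at $a$ or its $X$-neighbor, and then using connectivity of $G[Y]$ (via $e_y$) to close a cycle through $e_x$. Your version merely unifies the paper's two cases and, usefully, makes explicit why $a$ cannot be a pendant vertex of $G[X]$ (via the standing assumption $\size{X}\ge 2$), which the paper asserts without justification.
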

\begin{proof}
    If $Y$ has an edge incident to $a$, then $G[Y]$ has a $a$-$b$ path, and the statement holds.
    Suppose that $Y$ has no edges incident to $a$. 
    Since $\size{X}$ is at least two, $a$ is not a pendant vertex in $G[X]$.
    Thus, $\Gamma_{G[X]}(a) \setminus \set{e_X}$ has an edge $f = \set{a, c}$.
    Since $Y$ is an edge dominating set and $Y$ has no edges incident to $a$, $Y$ has an edge incident to $c$.
    Thus, $G[Y]$ has a $v$-$c$ path, and the statement holds.
\end{proof}

    
Now, we are ready to prove the following key lemma.

\begin{lemma}\label{lem:tree}
    Let $X$ and $Y$ be two distinct minimal connected edge dominating sets.
    Suppose that $G[X \cup Y]$ is a tree.
    Then, $\mathcal G$ has a directed path from $X$ to $Z$, where $Z$ is a minimal connected edge dominating set satisfying $\distor{X}{Y} > \distor{Z}{Y}$.
\end{lemma}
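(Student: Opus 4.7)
The plan is to first take a type-III neighbor $Z_0$ of $X$ that removes a chosen edge $e \in X \setminus Y$, and then iteratively apply \Cref{lem:edge:in:cycle:is:removable} to strip off any residual edges not in $Y$, arriving at some $Z$ with $\distor{Z}{Y} < \distor{X}{Y}$. Since $X \ne Y$ and both are minimal, $X \setminus Y \ne \emptyset$, so I pick $e \in X \setminus Y$; by \Cref{lem:XY:pendant}, $e = \set{u, v}$ is pendant in $G[X]$, and I take $v$ to be its pendant vertex.

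Before applying the type-III construction I would verify that $v$ is a leaf of the tree $T := G[X \cup Y]$. If $Y$ contained an edge $\set{v, b}$, it would lie in $Y \setminus X$ (since $e$ is $v$'s unique $X$-edge), and \Cref{lem:pendant:cycle} would produce a cycle in $T$ through $e$, contradicting that $T$ is a tree. The same argument rules out any $G$-pendant edge at $v$: such an edge would need to be dominated by $Y$, but no $Y$-edge touches $v$, and no $Y$-edge can touch its degree-one neighbor either. Combined with \Cref{lem:pendant}, this shows that the type-III neighbor $Z_0$ of $X$ with respect to $e$ exists, so there is a directed edge $(X, Z_0)$ in $\mathcal{G}$.

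Starting from $Z_0$, I would iteratively cut cycles: while $G[Z_i \cup Y]$ contains a cycle $C$, pick any edge $f \in (E(C) \cap Z_i) \setminus Y$ (nonempty because $G[Y]$ is a tree, so no cycle of $G[Z_i \cup Y]$ lies entirely in $Y$) and apply \Cref{lem:edge:in:cycle:is:removable} to obtain a neighbor $Z_{i+1} \subseteq (Z_i \cup Y) \setminus \set{f}$, giving a directed edge $(Z_i, Z_{i+1})$ in $\mathcal{G}$. Because $Z_i$ is an edge dominating set we have $V(G[Z_i]) \cap V(G[Y]) \ne \emptyset$, so $G[Z_i \cup Y]$ is always connected, and the loop must terminate at some $Z$ for which $G[Z \cup Y]$ is a tree.

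The main obstacle is to guarantee $\distor{Z}{Y} < \distor{X}{Y}$ rather than only $\distor{Z}{Y} < \distor{Z_0}{Y}$, because the type-III construction may introduce edges outside $X \cup Y$ and thus initially inflate the distance. I would handle this by tracking the vertex $v$: by induction, $v \notin V(G[Z_i \cup Y])$ for every $i$. The base case holds because $X \setminus \set{e}$ has no $v$-incident edge, the edges added in the type-III construction go between $\wpp{X, e}$ and $V(G[X]) \setminus \set{v}$ (neither endpoint being $v$), and $Y$ has no $v$-edge; the induction step is immediate since $Z_{i+1} \cup Y \subseteq (Z_i \cup Y) \setminus \set{f}$. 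Hence $V(G[Z \cup Y]) \subseteq V(G[X \cup Y]) \setminus \set{v}$, and since both $G[Z \cup Y]$ and $G[X \cup Y]$ are trees,
\begin{equation*}
    \distor{Z}{Y} = |V(G[Z \cup Y])| - 1 \le |V(G[X \cup Y])| - 2 = \distor{X}{Y} - 1 < \distor{X}{Y},
\end{equation*}
giving the required closer minimal connected edge dominating set $Z$ reachable from $X$ in $\mathcal{G}$.
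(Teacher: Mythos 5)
Your proof is correct in its overall structure and reaches the same conclusion, but the second half is a genuinely different argument from the paper's. The setup coincides: both pick $e \in X \setminus Y$ (pendant by \Cref{lem:XY:pendant}), justify that the type-III neighbor with respect to $e$ exists (the paper, like you, uses \Cref{lem:pendant} and the fact that $Y$ dominates the private edges of $e$ while touching no edge at $v$), and then must clean up the extra edges $F \setminus Y$ introduced by the type-III step. The paper does this cleanup edge-by-edge: it maintains the invariant $T \subseteq (X \cup Y \cup (F \cap T)) \setminus \set{e}$, and for each residual $f \in (F\setminus Y)\cap T$ runs a case analysis (pendant vs.\ non-pendant in $G[T]$, with an auxiliary claim about $\Gamma(a)\cap X$ and $\Gamma(w)\cap Y$) to exhibit a cycle through $f$ in $G[T\cup Y]$ or $G[T\cup X]$ before invoking \Cref{lem:edge:in:cycle:is:removable}. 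You instead run a generic loop --- cut any cycle of $G[Z_i \cup Y]$ at any edge outside $Y$ until acyclicity is reached --- and recover the strict decrease in $\distor{\cdot}{Y}$ by a vertex-counting argument on two trees, using that $v$ never re-enters $V(G[Z_i \cup Y])$. This buys you a cleaner termination/progress argument and avoids the paper's case analysis entirely (and, incidentally, sidesteps the paper's delicate use of cycles in $G[T\cup X]$). The one step you assert without justification is $V(G[Z\cup Y]) \subseteq V(G[X\cup Y])$, which your inequality needs: it requires checking that every $w \in \wpp{X,e}$ already lies in $V(G[Y])$. This does hold --- each such $w$ is the far endpoint of a private edge $\set{v,w}$ of $e$, which $Y$ must dominate, and since $Y$ has no edge at $v$ it has one at $w$ --- but you should say so explicitly, since otherwise the added edges of $F$ could in principle enlarge the vertex set and break the count.
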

\begin{proof}
    Let $e$ be an edge in $X \setminus Y$.
    From \Cref{lem:XY:pendant}, $e$ is a pendant edge in $G[X]$. 
    Let $v$ be a pendant vertex of $G[X]$ that has $e$ as an incident edge.

    Let $Z'$ be the type-III neighbor of $X$ with respect to $e$, 
    and let $F$ be the set of edges that is added to $X$ for obtaining $Z'$.
    From the construction, $Z'$ is a subset of $(X \cup Y \cup F) \setminus \set{e}$.
    We show that for a minimal connected edge dominating set $T \subseteq (X \cup Y \cup F) \setminus \set{e}$,
    $T$ has a neighbor $T' \subseteq (X \cup Y \cup (F \cap T)) \setminus \set{e, f}$ 
    for some edge $f$ in $(F \setminus Y) \cap T$. 
    If this holds, then by repeatedly obtaining such neighbors, 
    we can show that $\mathcal G$ has a directed path from 
    $Z'$ to a minimal connected edge dominating set $Z$ satisfying $Z \subseteq (X \cup Y) \setminus \set{e}$  
    since both $Z'$ and $T'$ are subsets of $(X \cup Y \cup F) \setminus \set{e}$. 
    
    Let $f$ be an edge in $(F \setminus Y) \cap T$. 
    Suppose that $f$ is a non-pedant edge in $G[T]$. 
    Then, $G[T \cup Y]$ contains a cycle $C$ such that $f \in E(C)$ by \revise{the contraposition of \Cref{lem:XY:pendant}.}
    Thus, by \Cref{lem:edge:in:cycle:is:removable}, there is a neighbor $T'$ satisfying $T' \subseteq (T \cup Y) \setminus \set{f} \subseteq (X \cup Y \cup (F \cap T)) \setminus \set{e, f}$. 
    
    We next suppose that $f = \set{a, w}$ is a pendant edge in $G[T]$, where $w$ is a pendant vertex in $G[T]$.
    Since $f$ is contained in $F$, $w$ is contained in $\wpp{X, e}$. 
    We firstly show that the following claim holds. 
    \begin{claim}
        $\Gamma(a) \cap X \neq \emptyset$ and $\Gamma(w) \cap Y \neq \emptyset$.
    \end{claim}
    \begin{claimproof}
        If $\Gamma(w) \cap X \neq \emptyset$, then this contradicts that $e$ has a private edge whose endpoint is $w$.
        Thus, $\Gamma(a) \cap X \neq \emptyset$ holds since $f$ is dominated by $X$. 
        
        If $\Gamma(w) \cap Y = \emptyset$, then to dominate a private edge $h$ of $e$ by $Y$ such that $h = \set{v, w}$, $Y$ contains an edge whose endpoint is $v$. 
        Recall that $e \notin Y$, $e \in X$, and $e$ is a pendant edge in $G[X]$. 
        This implies that by \Cref{lem:pendant:cycle}, $G[X \cup Y]$ has a cycle and a contradiction occurs. 
        Thus, $\Gamma(w) \cap Y \neq \emptyset$. 
    \end{claimproof}
    
    Recall that $f \notin Y$. 
    From \Cref{lem:pendant:cycle} and the above claim, $G[T \cup X]$ or $G[T \cup Y]$ contains a cycle $C$ such that $f \in E(C)$.
    Hence, from \Cref{lem:edge:in:cycle:is:removable}, $T$ has a neighbor $T'$ that is contained in either $(T \cup X) \setminus \set{f}$ or 
    $(T \cup Y) \setminus \set{f}$.
    Since $T$ is contained in $(X \cup Y \cup F) \setminus \set{e}$, 
    $T'$ is contained in $(X \cup Y \cup (F \cap T)) \setminus \set{e, f}$. 
    Therefore, the lemma holds. 
\end{proof}

From \Cref{lem:tree,lem:connected}, the following lemma holds immediately.

\begin{lemma}\label{lem:edom:scc}
    For any pair of minimal connected edge dominating sets $X$ and $Y$ such that $X \neq Y$, 
    $\mathcal G$ has a directed path from $X$ to $Z$, where $Z$ is a minimal connected edge dominating set satisfying $\distor{X}{Y} > \distor{Z}{Y}$. 
\end{lemma}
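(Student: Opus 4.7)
The plan is to conclude Lemma~\ref{lem:edom:scc} by a direct case analysis on the structure of the edge-induced subgraph $G[X \cup Y]$, since the two preceding lemmas already cover the two possible cases. All of the technical work has been done in \Cref{lem:connected,lem:tree}; what remains is essentially a bookkeeping argument to show that these two lemmas between them cover every possible pair $(X, Y)$ with $X \neq Y$.

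First, I would observe that $G[X \cup Y]$ is necessarily connected. Indeed, both $X$ and $Y$ are connected edge dominating sets, so $G[X]$ is connected; if $G[X\cup Y]$ had a vertex or edge not reachable from some chosen basepoint of $G[X]$, one could produce an edge of $Y$ that is not dominated by $Y$ itself, contradicting that $Y$ dominates all edges (in particular the edges of $X$) — more concretely, every edge of $X$ shares an endpoint with an edge of $Y$ since $Y$ dominates it, gluing the two subgraphs together.

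Next, split on whether $G[X \cup Y]$ contains a cycle. If it does, then \Cref{lem:connected} directly produces a neighbor $Z \in \neigh{X}$ (so $\mathcal G$ has the one-edge directed path $X \to Z$) with $\distor{X}{Y} > \distor{Z}{Y}$, and we are done. If $G[X \cup Y]$ has no cycle, then by the connectedness observation above it is a tree, so \Cref{lem:tree} applies and yields a directed path in $\mathcal G$ from $X$ to some minimal connected edge dominating set $Z$ with $\distor{X}{Y} > \distor{Z}{Y}$.

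The only subtlety, and the part I would be most careful about, is justifying the connectedness of $G[X\cup Y]$ cleanly, because the paper only asserts it informally. The key is: pick any edge $e \in X$; because $Y$ is an edge dominating set, $e$ shares an endpoint with some $e' \in Y$, so the connected piece of $G[X]$ containing $e$ is joined through $e'$ to the connected piece of $G[Y]$ containing $e'$, and then $G[Y]$ being connected, together with $G[X]$ being connected, forces $G[X \cup Y]$ to be a single connected subgraph. Once this is established, the case split above gives the statement immediately, and the final conclusion follows with no further calculation.
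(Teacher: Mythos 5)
Your proposal is correct and follows essentially the same route as the paper: the paper also derives this lemma immediately from \Cref{lem:connected,lem:tree} via the same case split on whether $G[X\cup Y]$ contains a cycle, with the connectedness of $G[X\cup Y]$ (hence tree-ness in the acyclic case) justified by the same domination argument you give. Your more explicit justification of connectedness is a welcome tightening of what the paper only states informally, but it is not a different proof.
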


\Cref{prop:close} and \Cref{lem:edom:scc} imply that $\mathcal G$ is strongly connected. 
Thus, by using a standard graph search algorithm, we can traverse all vertices in $\mathcal G$ starting from an arbitrary solution.
\Cref{algo:enum:all} depicts the pseudocode. 

\DontPrintSemicolon
\begin{algorithm}[t]
    \caption{An algorithm enumerates all minimal connected edge dominating sets in $\order{nm^2\Delta}$ delay.}
    \label{algo:enum:all}
    \Procedure{\MTC{$G$}}{
        $T^* \gets$ an arbitrary minimal connected edge dominating set $G$\;
        Push $T^*$ to an empty queue $\mathcal Q$ and add $T^*$ to an empty set $\mathcal S$\;
        \While{$\mathcal Q$ is not empty}{
            $T \gets$ a minimal connected edge dominating set in $\mathcal Q$\;
            Output $T$ and delete $T$ from $\mathcal Q$\;
            \ForEach{neighbor $T'$ of $T$}{
                \lIf{$T' \not\in \mathcal S$}{
                    Add $T'$ to $\mathcal Q$ and to $\mathcal S$
                }
            }
        }
    }
\end{algorithm}

Next, we consider the delay of \Cref{algo:enum:all}.
To this end, we analyze the maximum out-degree of $\mathcal G$. 
If the maximum out-degree is bounded by polynomial in $n$
then we can output all the solutions with polynomial-delay by using a breadth-first search for $\mathcal G$.
The number of type-I neighbors is bounded by $\order{nm\Delta}$ since we have $n$ choices for $e$, $m$ choices for $f$, and $\Delta$ choices for $g$. 
Since the number of paths with length at most two is $\sum_{v \in V}d(v)^2 \le m\Delta$, the number of type-II neighbors is $\order{nm\Delta}$.
Recall that the type-III neighbor of a minimal connected edge dominating set with respect to $e$ is unique. Hence, the number of type-III neighbors is $\order{n}$. 
Thus, the maximum out-degree in $\mathcal G$ is $\order{nm\Delta}$, and we obtain the following theorem.



\begin{theorem}
    There is an $\order{nm^2\Delta}$-delay and exponential space algorithm for enumerating all minimal connected edge dominating sets.
\end{theorem}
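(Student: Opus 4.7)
The plan is to combine the strong connectivity of $\mathcal G$, already established by \Cref{lem:edom:scc}, with the out-degree bound derived immediately before the theorem, in order to show that \Cref{algo:enum:all} correctly enumerates every minimal connected edge dominating set with the stated delay. For correctness, the vertex set of $\mathcal G$ is, by construction, exactly the set of all minimal connected edge dominating sets of $G$, and \Cref{prop:close} combined with \Cref{lem:edom:scc} gives that $\mathcal G$ is strongly connected. Since \Cref{algo:enum:all} performs a standard breadth-first traversal of $\mathcal G$ starting from $T^*$, every vertex is eventually pushed to $\mathcal Q$, and the set $\mathcal S$ prevents re-insertion, so each minimal connected edge dominating set is output exactly once.

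For the delay, between two consecutive outputs the algorithm performs the body of a single iteration of the while loop on some dequeued solution $T$: generate every neighbor $T'$ of $T$, test $T' \notin \mathcal S$, and push $T'$ to $\mathcal Q$ when it is new. By the enumeration of type-I, type-II, and type-III neighbors immediately preceding the theorem, the out-degree of $\mathcal G$ is $\order{nm\Delta}$, so at most that many candidates are generated per iteration. For each candidate, we form the corresponding edge set $(T\setminus\set{e}) \cup \set{f,g}$, $(T\setminus\set{e}) \cup E(P)$, or $(T\setminus\set{e}) \cup F$ in $\order{m}$ time and then apply $\comp{\cdot}$, which costs $\order{m}$ by \Cref{lem:comp}. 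Storing $\mathcal S$ as a trie indexed by sorted edge lists makes each membership test take $\order{m}$ time, so no single candidate contributes more than $\order{m}$ work. Hence the time between two consecutive outputs is $\order{nm\Delta} \cdot \order{m} = \order{nm^2\Delta}$. The exponential-space claim is immediate, since $\mathcal S$ must retain every solution ever enqueued and the number of minimal connected edge dominating sets can itself be exponential in $n$.

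The main obstacle I anticipate is not the asymptotic bookkeeping but rather ensuring that the three neighbor-enumeration routines, in particular the type-III one which requires identifying $\wpp{T,e}$ and picking a single edge from each vertex of $\wpp{T,e}$ to $V(G[T\setminus\set{e}])$, are actually implementable in $\order{m}$ time per candidate. I would handle this by an $\order{m}$ preprocessing of the private-edge structure of $T$ once at the start of the iteration, after which each candidate of any type is produced in $\order{m}$ time; this preserves both the per-candidate bound and the overall $\order{nm^2\Delta}$ delay stated in the theorem.
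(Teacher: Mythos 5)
Your proposal is correct and follows essentially the same route as the paper's proof: correctness from the strong connectivity of $\mathcal G$ (\Cref{lem:edom:scc} with \Cref{prop:close}), and the delay bound obtained by multiplying the $\order{nm\Delta}$ out-degree by the $\order{m}$ cost per neighbor for computing $\comp{\cdot}$ (\Cref{lem:comp}) and for trie-based duplicate checking. Your extra remark about preprocessing the private-edge structure so that type-III neighbors are generated in $\order{m}$ time is a reasonable implementation detail the paper leaves implicit, but it does not change the argument.
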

\begin{proof}
    The correctness holds from \Cref{lem:edom:scc}.
    We analyze the delay of the algorithm. 
    The bottleneck of \Cref{algo:enum:all} is enumeration of all neighbors and checking for duplicates.
    Since we can compute $\comp{\cdot}$ in $\order{m}$ time by \Cref{lem:comp} and the number of neighbors is $\order{nm\Delta}$, we can enumerate all neighbors in $\order{nm^2\Delta}$ time. 
    By using a standard data structure, i.e., a radix tree, we can check the duplication in $\order{m}$ time and we can insert a solution in $\order{m}$ time. 
    Since the number of neighbors is $\order{nm\Delta}$, duplication checking can be done in $\order{nm^2\Delta}$ time in total.
    Thus, the delay of the algorithm is $\order{nm^2\Delta}$. 
\end{proof}

\section{\texorpdfstring{$K$}{K}-best enumeration of minimal connected edge dominating sets}
We modify \Cref{algo:enum:all} so that the modified algorithm approximately outputs $k$-best minimal connected edge dominating sets. 
To achieve this, we use a priority queue instead of the queue used in \Cref{algo:enum:all}. 
The priority of the elements of the queue is defined as the cardinality of elements. 
Moreover, the priority of elements with the same cardinality is allowed the arbitrary priority.
See \Cref{algo:enum:best} for the details.
To show the correctness of \Cref{algo:enum:best}, we analyze the cardinality of solutions on a path between minimal connected edge dominating sets $X$ and $Y$ in the supergraph $\mathcal G$.
When the cardinality of each minimal connected edge dominating set on the path is sufficiently small, 
we can approximately solve the $k$-best enumeration problem.

\DontPrintSemicolon
\begin{algorithm}[t]
    \caption{A $k$-best enumeration algorithm for minimal connected edge dominating sets in $\order{nm^2\Delta}$ delay with approximation factor $4$. In this algorithm, we allow a polynomial-time preprocessing to compute the first solution. A priority of each element in $Q$ is defined as the cardinality of elements}
    \label{algo:enum:best}
    \Procedure{\best{$G, k$}}{
        Let $T^*$ be a minimal connected edge dominating set of $G$ founded using a $2$-approximation algorithm in \cite{DBLP:journals/ipl/ArkinHH93}\;
        Add $T^*$ to a priority queue $\mathcal Q$ and to set $\mathcal S$\;
        \While{$\mathcal Q$ is not empty $\land$ $k > 0$}{
            Let $T \in Q$ be a minimal connected edge dominating set with the minimum cardinality\;
            Output $T$, decrease $k$ by one, and delete $T$ from $\mathcal Q$\;
            \ForEach{$T'$ is a neighbor of $T$}{
                \lIf{$T' \not\in \mathcal S$}{
                    Add $T'$ to $\mathcal Q$ and to $\mathcal S$
                }
            }
        }
    }
\end{algorithm}

\begin{lemma}\label{lem:edom:path}
    For any pair of minimal edge dominating sets $X$ and $Y$, 
    $\mathcal G$ has a directed path $\pi = (Z_1 = X, \ldots, Z_\ell = Y)$ from $X$ to $Y$ such that any solution on $\pi$ has cardinality at most $\size{X} + 2\size{Y}$.
\end{lemma}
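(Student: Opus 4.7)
The plan is to induct on $\distor{X}{Y}$, with the strengthened hypothesis that every solution on $\pi$ has size at most $\distor{X}{Y} + \size{Y}$; since $\distor{X}{Y} = \size{X \cup Y} \le \size{X} + \size{Y}$, this implies the stated bound $\size{X} + 2\size{Y}$. The base case $X = Y$ uses the trivial path $(X)$. For $X \neq Y$, I would build an initial sub-path $X \to \cdots \to Z$ in $\mathcal G$ ending at some $Z \subseteq (X \cup Y) \setminus \set{e}$ for an edge $e \in X \setminus Y$, and then invoke the inductive hypothesis on $(Z, Y)$. Because $e \notin Y$ and $Z \cup Y \subseteq (X \cup Y) \setminus \set{e}$, we have $\distor{Z}{Y} \le \distor{X}{Y} - 1$, so the induction is well-founded.

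The construction of the initial sub-path mirrors the case split in the proof of \Cref{lem:edom:scc}. If $G[X \cup Y]$ contains a cycle, \Cref{lem:edge:in:cycle:is:removable} directly supplies such a neighbor $Z$ in a single step whose size is at most $\size{X \cup Y}$. If $G[X \cup Y]$ is a tree, I would follow \Cref{lem:tree}: take the type-III neighbor $Z'$ of $X$ with respect to a pendant edge $e \in X \setminus Y$ of $G[X]$, then iteratively remove edges of $F \setminus Y$ via type-I or type-II neighbors until reaching $Z \subseteq (X \cup Y) \setminus \set{e}$, where $F$ is the auxiliary edge set added in the type-III step. Each intermediate solution on this chain lies in $(X \cup Y \cup F) \setminus \set{e}$, and since $F \cap X = \emptyset$, its size is at most $\size{X \cup Y} + \size{F \setminus Y}$.

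The crux is the bound $\size{F} = \size{W(X, e)} \le \size{Y \setminus X}$. For each $w \in W(X, e)$, the private edge $\set{v, w}$ of $e$ (where $v$ is the pendant vertex of $e$ in $G[X]$) forces $\Gamma(w) \cap X = \emptyset$. The tree assumption together with \Cref{lem:pendant:cycle} rules out $Y$ dominating $\set{v, w}$ through $v$, so there is a $Y$-edge $g_w$ incident to $w$; since $w$ has no $X$-edges, $g_w \in Y \setminus X$. I claim the map $w \mapsto g_w$ is injective: if $g_{w_1} = g_{w_2}$ for distinct $w_1, w_2 \in W(X, e)$, the common edge would have to be $\set{w_1, w_2}$, forcing this edge to exist in $G$. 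But then $X$, being edge dominating, must dominate $\set{w_1, w_2}$ through some edge of $\Gamma(w_1) \cup \Gamma(w_2)$, contradicting $\Gamma(w_i) \cap X = \emptyset$. Hence $\size{W(X, e)} \le \size{Y \setminus X}$, giving intermediate sizes at most $\size{X \cup Y} + \size{Y \setminus X} \le \distor{X}{Y} + \size{Y}$, and the inductive hypothesis on $(Z, Y)$ closes the tree case.

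The main subtlety is the injectivity argument for $w \mapsto g_w$; without it a naive count gives only $\size{F} \le 2\size{Y}$, which yields the weaker bound $\size{X} + 3\size{Y}$ and would suffice only for a $5$-approximation rather than the target $4$-approximation used by \MTC-Best.
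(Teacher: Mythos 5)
Your proof is correct and follows essentially the same route as the paper's: descend on $\distor{\cdot}{\cdot}$, split on whether $G[X\cup Y]$ has a cycle, handle the cycle case by \Cref{lem:edge:in:cycle:is:removable}, and in the tree case take a type-III neighbor and clean up $F\setminus Y$ via \Cref{lem:tree}, with the whole argument hinging on a bound for $\size{F}=\size{\wpp{X,e}}$. The only real divergence is that bound. You prove $\size{F}\le\size{Y\setminus X}$ via an injection $w\mapsto g_w$ into $Y\setminus X$, which is valid (the key facts $\Gamma(w)\cap X=\emptyset$ and $\Gamma(w)\cap Y\neq\emptyset$ are exactly the Claim inside the paper's proof of \Cref{lem:tree}, and the injectivity argument is sound because a shared $g_w$ would be an edge $\set{w_1,w_2}$ undominated by $X$). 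The paper instead observes that every $w\in\wpp{X,e}$ lies in $V(G[Y])$ and that $G[Y]$ is a tree, so $\size{F}\le\size{V(G[Y])}=\size{Y}+1$, which already yields $\size{X}+2\size{Y}$ after accounting for the removed edge $e$. So your closing remark overstates the case: the injectivity is not actually needed to avoid the weaker $\size{X}+3\size{Y}$ bound --- the naive ``each $Y$-edge covers at most two $w$'s'' count is not the only alternative, since counting vertices of the tree $G[Y]$ rather than edges of $Y$ suffices. Your version does give a marginally sharper intermediate bound ($\distor{X}{Y}+\size{Y\setminus X}$), and the explicit induction on $\distor{X}{Y}$ makes the termination of the path construction cleaner than the paper's informal decomposition into subpaths $\pi_1,\pi_2,\dots$, but neither change affects the stated constant.
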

\begin{proof}       
    Let $Z$ be a minimal connected edge dominating set such that $Z \subseteq X \cup Y$. 
    From \Cref{lem:connected}, if $G[Z \cup Y]$ has a cycle $C$, then $Z$ has a neighbor $\tilde{Z}$ that satisfies $\tilde{Z} \subseteq Z \cup Y \subseteq X \cup Y$. 
    
    Suppose that $G[Z \cup Y]$ has no cycles. 
    By \Cref{lem:tree}, we can find a path from $Z$ to $\tilde{Z}$ such that $\tilde{Z} \subseteq X \cup Y$. 
    Let $Z'$ be a type-III neighbor of $Z$ with respect to an edge $e$ and a solution on the path from $Z$ to $\tilde{Z}$.
    According to the definition of type-III neighbors, we first add an edge set $F$ to $Z$. 
    The size of $F$ is at most $\size{Y} + 1$ since $Y$ dominates edges in $F$ and $G[Y]$ is connected. 
    Thus, $\size{Z'} \le \size{Z} + \size{F} - 1 \le \size{Z} + \size{Y}$. 
    Moreover, according to the proof of \Cref{lem:tree}, 
    on a path from $Z'$ to $\tilde{Z}$, 
    any minimal connected edge dominating set $Z''$ on the path satisfies that 
    $Z'' \subseteq (Z \cup Y \cup F) \setminus \set{e} \subseteq (X \cup Y \cup F) \setminus \set{e}$.
    
    From the above discussion, we can decompose $\pi$ to subpaths $\pi_1, \pi_2, \dots$ so that 
    for any $i$, 
    the first and last minimal connected edge dominating sets of $\pi_i$ are subsets of $X \cup Y$ and any intermediate minimal connected edge dominating set of $\pi_i$ has size at most $\size{X \cup Y} + \size{Y}$. 
    Thus, the maximum cardinality of a minimal connected edge dominating set on $\pi$ is at most $\size{X \cup Y} + \size{Y} \le \size{X} + 2\size{Y}$.
\end{proof}

Finally, we show that our algorithm can solve the $k$-best minimal connected edge dominating set enumeration problem.
Let $\mathcal S$ be a set of solutions that are already outputted,
$X$ be the solution in $\mathcal S$,
and $Y$ be a minimum solution not included in $\mathcal S$.
From \Cref{lem:edom:path}, $\mathcal G$ has a directed path connecting $X$ and $Y$ with a path $\pi$ such that solutions corresponding to nodes of $\pi$ have cardinality at most $\size{X} + 2\size{Y}$.
Thus, by traversing $\mathcal G$ using a priority queue,
we show that we only output solutions with cardinality at most $\size{X} + 2\size{Y}$ until $Y$ is outputted.
From this observation, the following theorem can be shown.

\begin{theorem}
    Suppose that there is an $f(n, m)$-time $c$-approximation algorithm for a minimum connected edge dominating set.
    Then, 
    we can approximately enumerate $k$-best minimal connected edge dominating sets in $\order{nm^2\Delta}$ delay with approximation ratio $c+2$ after an $f(n,m)$-time preprocessing.
\end{theorem}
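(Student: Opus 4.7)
The plan is to decouple two claims: the $O(nm^2\Delta)$ delay bound and the $(c+2)$ approximation guarantee. The delay carries over from the all-enumeration algorithm with essentially no change, since \Cref{algo:enum:best} differs from \Cref{algo:enum:all} only in (i) seeding the search with a specific $T^*$ obtained from the given $c$-approximation algorithm (cost $f(n,m)$, charged as preprocessing) and (ii) replacing the FIFO queue by a priority queue keyed by cardinality. Neighbour generation and duplicate detection are unchanged, so the out-degree bound $O(nm\Delta)$ on $\mathcal G$ together with \Cref{lem:comp} for computing $\comp{\cdot}$ still yields $O(nm^2\Delta)$ time between consecutive outputs; heap operations add only a logarithmic factor absorbed into this bound.

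For the approximation ratio, let $S_1,\dots,S_k$ denote the outputs of \best, let $Y^*$ be a minimum-cardinality minimal connected edge dominating set that the algorithm never outputs (so $|Y^*|=\OPT$), and let $m^*$ be the cardinality of a mini\emph{mum} connected edge dominating set of $G$. Fix any $i$ and consider the instant just before $S_i$ is extracted from $\mathcal Q$. By \Cref{lem:edom:path} there is a directed path $\pi=(Z_1=T^*,\dots,Z_\ell=Y^*)$ in $\mathcal G$ whose every vertex has cardinality at most $|T^*|+2|Y^*|$. I will argue that at this instant $\mathcal Q$ still contains some $Z_j$ of $\pi$: let $j^*$ be the largest index with $Z_{j^*}\in\mathcal S$, which is well defined since $T^*\in\mathcal S$ throughout the run; either $j^*=\ell$, in which case $Y^*\in\mathcal S$ can only still be in $\mathcal Q$ because it is never output, or $j^*<\ell$ and then $Z_{j^*+1}\notin\mathcal S$ forces $Z_{j^*}$ to be unprocessed (processing $Z_{j^*}$ would have inserted all of its neighbours into $\mathcal S$), so $Z_{j^*}$ is still in $\mathcal Q$.

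Given the claim, the priority-queue rule yields $|S_i|\le |Z_j|\le |T^*|+2|Y^*|$ because $S_i$ is extracted as an element of minimum cardinality; chaining $|T^*|\le c\cdot m^*\le c\cdot\OPT$ with $|Y^*|=\OPT$ then gives $|S_i|\le(c+2)\OPT$, which is the claimed ratio. The step I expect to require the most care in writing out is the bookkeeping behind the ``some $Z_j\in\mathcal Q$'' claim: it rests on the invariants that membership in $\mathcal S$ is monotone over the run, that the only way to leave $\mathcal S\cap\mathcal Q$ is to be output, and that each dequeue operation adds every neighbour of the dequeued vertex to $\mathcal S$; once these invariants are stated explicitly, combining them with \Cref{lem:edom:path} and the $c$-approximation bound is a short calculation.
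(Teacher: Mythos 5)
Your proof is correct and rests on the same two pillars as the paper's: the delay analysis is inherited verbatim from the all-enumeration algorithm, and the $(c+2)$ ratio follows from \Cref{lem:edom:path} combined with the priority-queue extraction rule and the $c$-approximate seed $T^*$. The organization of the approximation argument differs, though. The paper proceeds by induction on $k$: at each step it re-chooses $X$ as a minimum solution among those already output and $Y$ as a minimum solution not yet output, invokes \Cref{lem:edom:path} on that pair, and concludes that the solutions pushed along the resulting path have cardinality at most $(c+2)\size{Y}\le(c+2)\OPT$. You instead fix a single path $\pi$ from $T^*$ to the never-output optimum $Y^*$ and prove an explicit invariant --- that at every extraction some vertex of $\pi$ still sits in $\mathcal Q$ --- via the ``largest index $j^*$ with $Z_{j^*}\in\mathcal S$'' argument. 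This buys you something: the paper's induction silently relies on exactly this kind of frontier argument (it never says why a small element of $\pi$ is actually \emph{in the queue} at the moment $S_{i}$ is extracted), whereas you spell out the three invariants that make it go through. The paper's version, in exchange, avoids naming $Y^*$ up front and so reads more naturally as a per-step guarantee. Both yield the same bound, since $|T^*|+2|Y^*|\le c\cdot m^*+2\OPT\le(c+2)\OPT$; the only cosmetic caveat in yours is that ``heap operations add a logarithmic factor'' should be read as $\order{\log 2^m}=\order{m}$ per operation, which is indeed absorbed into $\order{nm^2\Delta}$ exactly as the paper computes.
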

\begin{proof}
    We prove the theorem using induction on $k$.
    When $k$ is equal to one, the statement holds since we have an $f(n, m)$-time $c$-approximation algorithm for the minimum connected edge dominating set problem.
    Thus, we consider the induction step.
    
    Let $\mathcal S$ be a set of minimal connected edge dominating sets that are already outputted, 
    $X$ be a minimal connected edge dominating set in $\mathcal S$ with the minimum cardinality, and
    $Y$ be a minimal connected edge dominating set with the minimum cardinality not contained in $\mathcal S$. 
    We first show the following claim.
    \begin{claim}
        The out-neighbors $N^+_{\mathcal G}(\mathcal S)$ contain a minimal connected edge dominating set $Z$ that satisfies $\size{Z} \le \size{X} + 2\size{Y}$, where $N^+_{\mathcal G}(\mathcal S)$ is the out-neighbors of $\mathcal S$ in $\mathcal G$.
    \end{claim}
    \begin{claimproof}
        From \Cref{lem:edom:scc,lem:edom:path}, $\mathcal G$ has a path $\pi$ from $X$ to $Y$ such that 
        the cardinality of every minimal connected edge dominating set on $\pi$ is at most $\size{X} + 2\size{Y}$.
        Since $\mathcal S$ contains $X$ and does not contain $Y$, $N^+_{\mathcal G}(\mathcal S)$ contains at most one element in $\pi$.
        Since the cardinality of every element on $\pi$ is at most $\size{X} + 2\size{Y}$, the claim holds.
    \end{claimproof}
    
    In \Cref{algo:enum:best}, a minimal connected edge dominating set $X$ is contained in $\mathcal Q$ if and only if $X$ is contained in $N^+_{\mathcal G}(\mathcal S)$.
    By the above claim, the cardinality of a minimum element in $Q$ is at most $\size{X} + 2\size{Y}$.
    Hence, we output a minimal connected edge dominating set with cardinality at most $\size{X} + 2\size{Y} \le (c+2)\size{Y}$ until we output $Y$.

    Finally, we analyze the delay of the algorithm. 
    The algorithm is the same as \Cref{algo:enum:all} except that a queue is changed to a priority queue.
    Moreover, since the number of solutions is at most $2^m$, we can insert a new solution in $\order{m}$ time. 
    Thus, the delay of the algorithm is $\order{nm^2\Delta}$ after $f(n, m)$-time preprocessing.
\end{proof}


Arkin et al.~\cite{DBLP:journals/ipl/ArkinHH93} give a polynomial-time $2$-approximation algorithm for a minimum connected edge dominating set. 
Thus, by using their algorithm as a preprocessing, 
we can immediately obtain an $\order{nm^2\Delta}$-delay approximate enumeration algorithm for $k$-best minimal connected edge dominating sets with an approximation ratio $4$.
Recall that 
the delay of the algorithm is $\order{nm^2\Delta}$. 
However, we allow polynomial-time preprocessing time to find the first solution.


\section*{Acknowledgement}
This work is partially supported by JST CREST Grant Number JPMJCR18K3, JSPS KAKENHI Grant Numbers JP19H01133, JP19K20350, JP20H05793, JP21K17812, JP21H05861, JP22H03549, and JP22H03549, and JST ACT-X Grant Number JPMJAX2105, Japan.

\bibliographystyle{plain}
\bibliography{main.bib}

\end{document}